\documentclass[a4paper,12pt]{amsart} 
\usepackage[margin=1in]{geometry}  
\usepackage{amsmath}
\usepackage{amssymb}
\usepackage{amsthm}
\usepackage{color}
\usepackage{hyperref}
\usepackage{enumitem}
\usepackage{cite}
\newtheorem{theorem}{Theorem}

\newtheorem{remark}{Remark}

\newtheorem{proposition}{Proposition}
\newtheorem{corollary}{Corollary}
\theoremstyle{definition}
\newtheorem{example}{Example}

\def\bea{\begin{eqnarray}}
\def\eea{\end{eqnarray}}
\def\be{\begin{equation}}
\def\ee{\end{equation}}
\def\bes{\begin{equation*}}
\def\ees{\end{equation*}}
\def\bs{\begin{split}}
\def\es{\end{split}}
\def\beas{\begin{eqnarray*}}
\def\eeas{\end{eqnarray*}}
\def\inv1{{\text{I}_1}}
\def\tinv1{{\overline{\text{I}}_1}}
\def\ii{{\text{I}_2}}
\def\iii{{\text{I}_3}}
\def\tiii{{\overline{\text{I}}_3}}
\def\tiv{{\overline{\text{I}}_4}}
\begin{document}
\title{ Equivalence Groups of Inhomogeneous Hyperelasticity  }
\author[Özer]{Saadet S. Özer}
\address[Özer]{Department of Mathematics, Istanbul Technical University, 34467 Istanbul, Turkey}
\email{saadet.ozer@itu.edu.tr}

\begin{abstract}
Equivalence groups associated with the field equations of arbitrary motions of inhomogeneous hyperelastic solids are investigated.  The  explicit forms of the infinitesimal generators, namely the components of the isovector field associated with the equivalence groups  are derived. Additional constraints on the isovector components are  established to ensure that the transformed  equations are supported by a  strain energy function.  Possible mappings between the equations of materials with different properties are examined. Not only the transformations between homogeneous and inhomogeneous hyperelasticity equations, but also the transformations between the equations and strain energy functions of isotropic and anisotropic materials are investigated.   The results are illustrated with several examples.
\end{abstract}
\vskip 5mm
\keywords{Lie Group Application; Equivalence Groups; Isovector Field;  Hyperelasticity}
\maketitle{} 
\section{Introduction}

The theory of nonlinear partial differential equations  and their applications in continuum mechanics form the cornerstones of mathematical physics. Significant progress in this field has resulted from the development of systematic algebraic and geometric methods based on Lie group analysis. These methods have contributed to deriving conservation laws and  obtaining exact solutions through symmetry  and equivalence transformations. They have also been used to investigate   the existence of transformations between the structure's invariants and the general constitutive equations. 

Systems of partial differential equations involving arbitrary functions of independent, dependent variables and their derivatives  describe a {\it family of equations}. Almost all field equations  of classical continuum physics possess this property.  Under different functional dependencies,  equations of the same family represent the  behavior of different materials.  Equivalence groups are groups of continuous transformations that preserve the family of equations.  These transformations may map any equation  to another equation of the same family.  Consequently, they transfer solutions of one member to solutions of another member of the family.  

The first systematic treatment of the equivalence transformations was given by Ovsiannikov \cite{ovs}. He developed a method to determine the equivalence transformations of a family of equations and applied it to the classification of the nonlinear heat conductivity equation. Thereafter many researchers investigated the equivalence transformations of various families of equations, see, e.g.,  \cite{torrisi1998,ivanova2010lie,ibragimov1994equivalence,YONG2022107564}.
Differential invariants associated with the equivalence transformations have  been studied, for instance, in \cite{tsaousi2010, bruzon2018exact,li2020,koval2024}.  Obtaining solutions of nonlinear equations  via equivalence transformations  has been demonstrated in  \cite{lisle1992equivalence,oliveri2005,ozer2018,ozer2019}. For more comprehensive account of the theory and its applications, we refer to the recent book by Oliveri \cite{oliveribook} and the references therein.

In the present work, we investigate  equivalence transformations of the governing equations of hyperelasticity by employing  the formulation developed by Şuhubi  for the balance equations \cite{SuHubi:2000:ExplicitIsovectorFields}.  This approach is geometric and relies    on Cartan's  formulation of differential equations using exterior differential forms \cite{cartan1945}.  The method for finding symmetries of differential equations using  Cartan's formulation was introduced by Harrison and Estabrook \cite{har} and explored further by Edelen \cite{ede}. In the terminology of modern jet-space analysis, the computation of isovector fields in this framework corresponds to the standard prolongation of infinitesimal generators described by Olver \cite{olver1993}.  Besides Şuhubi's works within Lie group applications  on  hyperelasticity \cite{SUHUBI1987,suhubi1989:ConservationLawsNonlinearElastodynamics,suhubi1992:SimilarityPlaneWavesHyperelastic,suhubi1994:SymmetryHeterogeneousHyperelastic,SuHubi:1995:SymmetryRadialHyperelastic,SuHubi:1997:SymmetryArbitraryHyperelastic},  Tracin\`{a} \cite{tracina2012} studied  exact solutions of equations of elasticity via Lie symmetries. Equivalence transformations for anti-plane shear motions of fiber-reinforced hyperelastic solids were obtained in \cite{chev}. Oliveri and Speciale \cite{oliveri2012}   examined classes of $2\times 2$ first order quasilinear systems of PDEs and applied them to some physical problems.  And recently Olver  \cite{olver2024boundary} investigated variational principles and admissible boundary conditions in elasticity through the theory of null Lagrangians. 
 Şuhubi \cite{SuHubi:2000:ExplicitIsovectorFields} determined
equivalence group for the field equations of \emph{homogeneous}
hyperelasticity as an example.   A  question posed by Şuhubi is
whether equivalence transformations can relate the governing equations
of different  hyperelastic materials: can an equation
describing a homogeneous material be mapped to one describing an
inhomogeneous material, or an isotropic material to an anisotropic one?
The present paper answers both questions affirmatively for
inhomogeneous hyperelasticity.

The paper is organized as follows. Section 2 presents the governing equations of general inhomogeneous hyperelastic materials and the general balance equations. The isovector components of
the associated equivalence group  are derived explicitly in Section 3. These components differ  from those of the homogeneous case, in particular through the generators depending on the material coordinates. 
 Section 4 presents the main theorem of the paper. Beyond the group theoretic analysis, necessary and sufficient conditions
are established under which the transformed Piola-Kirchhoff stress
tensor is derivable from a scalar strain energy function.  It is also shown that the kinematic compatibility condition is
automatically preserved under the resulting transformations. Examples of mappings between homogeneous and inhomogeneous
materials, and between isotropic and anisotropic models are
constructed. These mappings arise from the equivalence transformations of the inhomogeneous hyperelasticity equations and cannot be obtained from the homogeneous equations alone.

\section{ Main Equations}
\subsection{Equations of inhomogeneous hyperelasticity}
Let the motion of a hyperelastic body be described by the transformations of the coordinates 
 \[
x_k= x_k (X_K, t),\quad k, K=1,2,3
\] 
where $X_K$ and $x_k$  are both Cartesian coordinates. $X_K$ represent the position of   a material particle in reference configuration, called material coordinates, and  $x_k$  represent  the place it occupies at time $t$, namely spatial coordinates. The strain energy function  for an inhomogeneous hyperelastic material is given by  $W =W (\bf C,X)$ where $\bf C=F^T F$ is the Green deformation tensor.  And $\bf F$ is  the deformation gradient tensor with components 
$F_{kK}$. Hence  we  take the strain energy function as $W=W({\bf F, X})$.  
The equations of motion of a hyperelastic solid in the absence of body forces  can be written as 
\be
\label{field}
\frac{\partial }{\partial X_K}\left( \frac{\partial W}{\partial F_{kK}}\right)-\rho_0({\bf X})\frac{\partial v_k}{\partial t}=0
\ee
where  $\rho_0(\bf X)$ is the density of the material in the reference configuration and  $v_k=\partial x_k/\partial t$ are the velocity components.  Throughout, repeated indices are summed over their range unless stated otherwise.
 By the functional dependence of the strain energy function, equation \eqref{field} can be expressed as
\be 
\label{main2}
\frac{\partial }{\partial F_{lL}}\left( \frac{\partial W}{\partial F_{kK}}\right) F_{lL,K} + \frac{\partial }{\partial X_K}\left( \frac{\partial W}{\partial F_{kK}}\right) -\rho_0({\bf X})\frac{\partial v_k}{\partial t}=0.
\ee
\subsection{Equations of the balance form}
Global balance equations on an $n$-dimensional region $\Omega \subset \mathbb{R}^n$ are given by the following integral equation
\begin{equation}
\label{global}
\int_{\partial \Omega} \Sigma_{\alpha i} n_i dS + \int_{\Omega} \Sigma_{\alpha} dV =0, \quad i=1,2,\dots,n,\quad \alpha=1,2,\dots,N.
\end{equation} 
Here $\Sigma_{\alpha i}$ and  $\Sigma_\alpha$ represent the flux and source terms, respectively, and $\bf n$ is the outward unit normal vector to the boundary $\partial \Omega$ of $\Omega$. The divergence theorem converts the boundary integral of the flux into a volume integral of its divergence. Because the chosen region $\Omega$ is arbitrary, the integrand must vanish pointwise, yielding the divergence form:
\begin{equation}
  \label{balanceeq}
\frac{\partial \Sigma_{\alpha i}}{\partial y_i} +\Sigma_\alpha=0, \quad i=1,2,\dots,n,\quad \alpha=1,2,\dots,N,
\end{equation}
which are called the general {\it balance equations}.  
For a second order equation, since the constitutive relations for the flux $\Sigma_{\alpha i}$  depend on the coordinates $y_i$, the field variables $u_\beta$ and their gradients $u_{\beta,j}$, the resulting partial differential equation is inherently quasilinear. Thus, the second order form of   \eqref{balanceeq}  is expressed as
\[
 \frac{\partial \Sigma_{\alpha i}}{ \partial u_{\beta,j}}  u_{\beta,ij} +  \frac{\partial \Sigma_{\alpha i}}{ \partial u_\beta}  u_{\beta,i} +  \frac{\partial \Sigma_{\alpha i}}{ \partial y_i}  + \Sigma _\alpha=0.
\]
By calling $u_{\beta,j}= v_{\beta j}$, it can be reduced to a first order system of differential equations
\be 
 \label{balance}
 \frac{\partial \Sigma_{\alpha i}}{ \partial v_{\beta j} } v_{\beta j,i} +  \frac{\partial \Sigma_{\alpha i}}{ \partial u_\beta}  v_{\beta i} +  \dfrac{\partial \Sigma_{\alpha i}}{ \partial y_i}  + \Sigma _\alpha=0.
 \ee
  The study of systems of balance equations plays an important role in the theory of partial differential equations, since many equations in classical continuum mechanics take this form.  In particular, the field equations of hyperelasticity can be expressed in the form of balance equations. 
To compare the equation of motion for inhomogeneous hyperelastic solids \eqref{main2} with the general form of the balance equations \eqref{balance}, let us set $n=4,\ N=3$,  and  replace the variables as follows: 
 \be
 \label{fix} 
 \{y_i,\, i=1,2,3,4\} \to \{X_K,\, K=1,2,3, y_4=t\},\ \{u_\beta,\, \beta=1,2,3\} \to \{x_k,\, k=1,2,3\}
 \ee
 We also define 
 \be
 \label{settingbalance}
 \Sigma_{kK}= \dfrac{\partial W }{\partial F_{kK}}, \quad \Sigma_{k4}=-\rho_0 v_k, \quad \Sigma_k =0, \quad v_{k4}= v_k, \quad v_{kK}= F_{kK}.
 \ee
Here $\Sigma_{kK}$ is the  Piola-Kirchhoff stress tensor. Direct substitution confirms that \eqref{balance} recovers \eqref{main2} term by term.  Therefore, the equivalence groups derived for the balance equations in \cite{SuHubi:2000:ExplicitIsovectorFields} can be directly applied to the equations of motion for hyperelastic solids. 
\section{Infinitesimal Generators of the Equivalence Groups} 
A vector field $V$ on the manifold with the coordinate cover 
$
K=\{X_K, t, x_k, F_{kK},v_k\}
$
is expressed by
\be\label{iso1}
V = -\phi_K \frac{\partial}{\partial X_K} - \psi \frac{\partial}{\partial t} + U_k \frac{\partial}{\partial x_k} + V_{kK} \frac{\partial}{\partial F_{kK}} + V_k \frac{\partial}{\partial v_k}
\ee
where the minus signs are chosen for convenience. To investigate  equivalence groups, first we  consider the functional dependencies of $\Sigma_{kK}$ and $\Sigma_{k4}$ given by \eqref{settingbalance}: 
\begin{equation}
\label{additionalvar}
s_{kKL}=\dfrac{\partial \Sigma_{kK}}{\partial X_L}, \quad s_{kKlL} = \dfrac{\partial \Sigma_{kK}}{\partial F_{lL}}, \quad s_{k4l4}=-\rho_0 \delta_{kl}.
\end{equation} 
 And we extend the  manifold covered by $K$ to a manifold with the coordinate cover:
\be 
\label{tildeK}
\mathcal{K}=\{X_K, t, x_k, F_{kK},v_k, \Sigma_{kK},  \rho_0, s_{kKL},  s_{kKlL} \}.
\ee
A vector field on the tangent space of the extended manifold covered by $\mathcal{K}$ is now expressed  by
\begin{equation}
\label{iso2}
\tilde{V} =V + S_{kK} \dfrac{\partial }{\partial \Sigma_{kK} } + Q \dfrac{\partial}{\partial \rho_0}+ S_{kKL} \dfrac{\partial }{\partial s_{kKL} } + S_{kKlL} \dfrac{\partial }{\partial s_{kKlL} }   \end{equation}
where $V$ is given by  \eqref{iso1}. 
We also introduce the isovector field  associated with the general balance equation \eqref{balance} by $\hat V$ as follows:
\begin{align*}
\hat{V} &= V + S_{kK} \dfrac{\partial}{\partial \Sigma_{kK}} + S_{k4} \dfrac{\partial}{\partial \Sigma_{k4}} + T_k \dfrac{\partial}{\partial \Sigma_k} + S_{kKL} \dfrac{\partial }{\partial s_{kKL} } + S_{kKlL} \dfrac{\partial }{\partial s_{kKlL} }\\
&+ S_{k4l4} \dfrac{\partial }{\partial s_{k4l4} } + S_{kK4} \dfrac{\partial }{\partial s_{kK4} } + S_{k4L} \dfrac{\partial }{\partial s_{k4L} }+S_{k44} \dfrac{\partial }{\partial s_{k44} } + S_{k4lL} \dfrac{\partial }{\partial s_{k4lL} } \\
&+ S_{kKl4} \dfrac{\partial }{\partial s_{kKl4} }. 
\end{align*}
Consistency between $\hat V$ and $\tilde V$ requires  the specific dependencies established  in \eqref{settingbalance} and \eqref{additionalvar}. Thus the  following coordinates of the extended manifold vanish identically: 
\[
s_{k4L}=s_{kK4}= s_{k44}=s_{kKl4}=s_{k4lL}=0,
\]
which compel us to impose the following restrictions on the isovector components of $\hat V$: 
\begin{equation}
\label{zeroiso}
S_{k4L}=S_{kK4}= S_{k44}=S_{kKl4}=S_{k4lL}=0.
\end{equation}
In addition to these, since $\Sigma_k=0$ the isovector components $T_k$ must also vanish identically:
\be 
\label{tk}
T_k=0.
\ee 
In the coordinate cover $\mathcal K$ \eqref{tildeK}, the variables satisfy the constraints $\Sigma_{k4}+\rho_0 v_k = 0$,
$s_{k4l4}+\rho_0\delta_{kl}=0$,  which follow from \eqref{settingbalance} and \eqref{additionalvar}. Therefore, $\hat V$ must preserve these constraints:
\[
\hat V\left(\Sigma_{k4}+\rho_0 v_k\right)=0, \qquad
\hat V\left(s_{k4l4}+\rho_0\delta_{kl}\right)=0.
\]
Evaluating these gives 
\begin{align}
&S_{k4}+\rho_0 V_k +Q v_k=0,\label{sk4}\\
&S_{k4l4}+ Q \delta_{kl} =0.\label{qbulma}
\end{align}
These equations determine the isovector component $Q$. \\
For the convenience of the reader, we present the isovector components of the balance equations \eqref{balance} obtained by  Şuhubi \cite{SuHubi:2000:ExplicitIsovectorFields} in Appendix. By employing these results together with the correspondences explained above, we start our computation with the general forms of the relevant isovector components  as follows:
\begin{align}
V_k &= U_{k,t} +U_{k,l} v_l+\phi_{K,t} F_{kK}+\psi_{,t} v_k +\phi_{K,l} v_l F_{kK} +\psi_{,l} v_lv_k, \nonumber\\
V_{kK} &= U_{k,K}+U_{k,l} F_{lK}+\phi_{L,K} F_{kL}+\phi_{L,l} F_{lK}F_{kL}+\psi_{,K} v_k+\psi_{,l}F_{lK} v_k,\nonumber\\
S_{k4} & =-g_{kl} \rho_0
v_l -\psi_{,K} \Sigma_{kK} -\psi_{,l} F_{lK} \Sigma_{kK} +\psi_{,t} \rho_0 v_k -\phi_{K,l} \rho_0 F_{lK} v_k \nonumber \\
& +f_{k4LMNlmn} F_{lL} F_{mM}F_{nN} + f_{k4LMlm}  F_{lL} F_{mM} + f_{k4Ll} F_{lL} +f_{k4}, \label{isoexp}\\
S_{kK} &=g_{kl} \Sigma_{lK}-\phi_{K,L} \Sigma_{kL}-\phi_{K,l} F_{lL} \Sigma_{kL}+\dot \phi_K \rho_0 v_k+\phi_{K,l} \rho_0 v_l v_k\nonumber\\
& +\phi_{L,l} F_{lL} \Sigma_{kK}+ \psi_{,l} v_l \Sigma_{kK}+f_{kKLM4lmn} F_{lL} F_{mM}v_n \nonumber\\
& +f_{kKLMlm} F_{lL} F_{mM} + f_{kKL4lm} F_{lL} v_m + f_{kKLl} F_{lL} + f_{kK4l} v_l +f_{kK} \nonumber
\end{align}
where  the functions  $g_{kl}$ and all $f$'s are arbitrary functions of $({\bf X},t, {\bf x})$. And  the functions $f$ are dictated by the antisymmetries enjoyed by the capital and lower case indices. Moreover they also  have a symmetric structure such that $f_{k4LMlm}=f_{k4MLml}$ and so on. Note that the isovector components $\phi_K,\ \psi,\ U_k$ depend in general on $({\bf X},t, {\bf x})$ as well and their explicit forms will be determined by the structure of the relations on some  isovector components in the following analysis. 

To examine the restrictions \eqref{zeroiso} on the isovector components, we use their explicit forms \eqref{slersuhubi} given in the Appendix and rewrite them for our specific problem as follows:
\begin{align}
S_{kKL} &= 
\frac{\partial G_{kK}}{\partial X_{L}}
+  \frac{\partial G_{kK}}{\partial x_l} F_{lL}  + \frac{\partial G_{kK}}{\partial \Sigma_{nM}} s_{nML}\label{SkKL}, \\
S_{kKlL} &= 
\frac{\partial G_{kK}}{\partial F_{lL}}
+ \frac{\partial G_{kK}}{\partial \Sigma_{nM}} s_{nMlL} \label{SkKlL},
\end{align}
where 
\be
\label{gkK}
G_{kK}= s_{kKL} \phi_{L}-s_{kKlL} V_{lL}+S_{kK}
\ee
and in particular,
\be \label{gk4}
G_{k4}= \rho_{0} V_k +S_{k4}.
\ee
The  components $S_{kK4}$  can be explicitly written from \eqref{SkKL} as
\[
S_{kK4}=\dfrac{\partial G_{kK} }{\partial t}+ \dfrac{\partial G_{kK} }{\partial x_l} v_l.
\]
When we impose the restriction  $S_{kK4}=0$, we get  a linear equation in the
independent coordinates $s_{kKL}$ and $s_{kKlL}$ by using \eqref{gkK}.  Since these coordinates,
together with $v_l$, are functionally independent of the base variables
and of one another, the coefficients of each must vanish separately. 
The coefficient of $s_{kKL}$ depends  only on  $\phi_L$, yielding $
\dot \phi_{K}=0, \ \phi_{K,l}=0.$
Here and throughout this paper, an overdot denotes differentiation with respect to time.  Therefore,  $\phi_K$ depend only on the material coordinates:
\be \label{phixol}
\phi_K=\phi_K(\bf X).
\ee
The coefficient of $s_{kKlL}$ involves only $V_{lL}$. Thus we obtain 
\bes
\dot  V_{kK}+V_{kK,l}\,v_l=0.
\ees
Substituting $V_{kK}$  from  \eqref{isoexp} into this equation yields a polynomial   equation in $F_{lL}$ and $ v_l$. Its  solution requires  the following relations to be satisfied:  
\begin{equation}
\begin{split}
\label{skK4_rest1}
& \psi_{,ml}=\psi_{,mK} =0,\quad  \dot \psi_{,m } \delta_{kl}+ U_{k,ml}=0, \\
&  \dot \psi_{,K} \delta_{kl}+ U_{k,Kl}=0, \quad  \dot U_{k,l }=\dot  U_{k,K}=0.
\end{split}
\end{equation}
We also have the following equation to be satisfied by the isovector components  $S_{kK} $:
\be 
\label{skKdot}
 \dot S_{kK}+ S_{kK,l} \, v_l=0.
\ee  
Using \eqref{SkKlL}, the restriction $S_{kKl4}= 0$ takes the form 
\[
\dfrac{\partial G_{kK}}{\partial v_l} -\rho_0 \dfrac{\partial G_{kK}}{\partial \Sigma_{l4}}=0.
\]
By the  same procedure as above, we note from \eqref{isoexp}  that    $V_{kK} $   are independent of $s_{kKlL}$. Hence,  $V_{kK} $ are also independent of    $v_l$. Therefore,   $\psi_{,L}=\psi_{,l}=0$. It follows that  $\psi$ is a function of time only:
\be 
\label{psitol}
\psi=\psi(t).
\ee
Substituting  \eqref{phixol} and  \eqref{psitol} in    \eqref{skK4_rest1}, we obtain     $U_k$  explicitly as
\be 
\label{uk} 
U_k= b_{kl} x_l +A_k(t)+c_{k}(X_K)
\ee
where $b_{kl}$ are constants,    $A_k$  and $ c_{k}$ are   arbitrary continuously differentiable functions of their variables.\\
We consider the restriction   $S_{k4lL}=0$. Using \eqref{SkKlL}, it can be expressed in terms of $G_{k4}$  as    
\[
\dfrac{\partial G_{k4}}{\partial F_{lL}}+ \dfrac{\partial G_{k4}}{\partial \Sigma_{mM}} s_{mMlL}=0. 
\]
The second term  vanishes because  none of the isovector components appearing in $G_{k4}$, such as $V_k$ and $ S_{k4}$ depends on $\Sigma_{mM}$. Furthermore, since $V_k$ have no explicit dependence on $F_{lL}$, it follows that $S_{k4}$ are also independent of $F_{lL}$. Consequently, we obtain a polynomial in $F_{lL}$, which vanishes  if the coefficients of every power of $F_{lL}$ vanish.   Hence, we have
\[
f_{k4LMNlmn}= f_{k4LMlm}= f_{k4Ll}=0.
\]
Using the antisymmetry of  the functions $f$, for example $f_{k4Ll}=-f_{kL4l} $, we immediately conclude that all  terms involving  $v_l$   in $S_{kK}$ vanish. Thus,  $S_{kK}$  reduced to 
 \begin{equation}
  \label{skklast1}
  S_{kK}=g_{kl} \Sigma_{lK}-\phi_{K,L} \Sigma_{kL}+ f_{kKLMlm} F_{lL} F_{mM} + f_{kKLl} F_{lL} + f_{kK}.
 \end{equation}
Let us now consider the restriction $S_{k4L}=0$. By \eqref{SkKL}, it takes the form   
\[
\dfrac{\partial G_{k4}}{\partial X_L}+ \dfrac{\partial G_{k4}}{\partial x_l} F_{lL}+ \dfrac{\partial G_{k4}}{\partial \Sigma_{lM}} s_{lML}=0, 
\]
where last term  vanishes identically, as established by the preceding analysis. The remaining part can be written explicitly in terms of the  isovector components as:
\[ \rho_0 \dfrac{\partial V_k}{\partial X_L} + \dfrac{\partial S_{k4}}{\partial X_L} + \dfrac{\partial S_{k4}}{\partial x_l} F_{lL} + \rho_0 \dfrac{\partial V_k}{\partial x_l} F_{lL}=0.
\]
Since $V_k$ are now independent of $X_L$ and $x_l$, this reduces to 
\[ \dfrac{\partial S_{k4}}{\partial X_L} + \dfrac{\partial S_{k4}}{\partial x_l} F_{lL}=0. \]
 Substituting the  resulting form of  $S_{k4}$ into this equation, we conclude that
 \[
 f_{k4}=f_{k4}(t), \quad g_{kl}=g_{kl}(t).
 \]
 Next, the restriction $S_{k44}=0$  yields  
 \[ S_{k4,l}=0,\quad  \dot S_{k4}+\rho_0  \dot V_{k}=0.\] 
 Substituting the expressions for $S_{k4}$ and $V_k$ from \eqref{isoexp} into the latter equation, we obtain
  \be 
 \label{dotgkl}
  \ddot U_{k}=0, \quad \dot f_{k4}=0,\quad  \dot g_{kl}=2 \ddot \psi \delta_{kl}.
 \ee 
Using the first equation of \eqref{dotgkl} in \eqref{uk}, we have  $A_k=\alpha_k t+\beta_k$, where  $\alpha_k$ 
and $\beta_k$ are arbitrary constants. \\
Let us now return to  equation \eqref{skKdot}. Since $S_{kK}$ are independent of 
 $v_l$, \eqref{skKdot}   reduces to
\[
 \dot S_{kK}=0, \quad S_{kK,l}=0.
\]
 These equations yield the following restrictions on the functions $f$'s in \eqref{skklast1}: 
\[
\ f_{kKLMlm}= f_{kKLMlm} ({\bf X}), \ f_{kKLl}= f_{kKLl} ({\bf X}), \ f_{kK}= f_{kK} ({\bf X}).
\]
Moreover, they also imply that  
$\dot g_{kl}=0$. Hence,  the last equation in \eqref{dotgkl}  yields 
\[
\psi=a_1 t+a_2
\]
where $a_1, a_2$ are arbitrary constants. \\
Now we  examine  the restriction $T_k=0$ given by \eqref{tk}.  Using $T_k$   from the Appendix, we have
\[
S_{kK,K} +  \dot S_{k4} + S_{kK,m} F_{mK} + S_{k4,l} v_l=0.
\]
Due to the functional restrictions on  the components derived  above, it is simplified to  
\begin{equation*}
S_{kK,K} =0.
\end{equation*}  
Substituting $S_{kK}$ in \eqref{skklast1} into this equation yields
\begin{align}
  \phi_{K,KL}&=0, \label{phinabla}\\
f_{kKLMlm,K}&= f_{kKLl,K} =f_{kK,K} =0.\label{fconst}
\end{align} 
To incorporate the antisymmetry of the functions $f$'s, we now express them in terms of the Levi-Civita permutation symbols  as follows:
\[
f_{kKLMlm}=  e_{KLM} e_{lmn} c_{kn} (X_N), \quad   f_{kKLl} = e_{KLM} c_{Mkl}(X_N).
\]
Although $ c_{kn}$ and $c_{Mkl}$ are introduced as arbitrary continuously differentiable functions of their arguments, the conditions in \eqref{fconst}, together with the Einstein 
summation convention, imply that their dependence on the material coordinates is no longer arbitrary. It should also be noted that the terms with $K=L$ (or $K=M$) vanish because of the properties of the permutation symbol. Moreover, interchanging $K$ and 
$L$ changes the order of the material coordinate indices, which is consistent with the antisymmetry of the tensor.  Consequently, their functional forms are constrained by 
\be \label{f2ler}
 e_{KLM} e_{lmn} \frac{\partial c_{kn}(X_N)}{\partial X_K} = 0, \quad  e_{KLM} \frac{\partial c_{Mkl}(X_N)}{\partial X_K}=0,\quad f_{kK,K} = 0.
\ee
If we multiply the first identity in \eqref{f2ler} by $e_{lmp}$ and use the properties of the permutation symbol, we get $e_{KLM} c_{kp,K}=0$. Then multiplying this by $e_{RLM}$,  we obtain $c_{kp,R}=0,$ for each $ k,p,R.$ Hence, $c_{kn}$ is a constant tensor.  

In the final step  to determine the  isovector component $Q $, we recall  equations \eqref{sk4} and \eqref{qbulma}. Expanding equation \eqref{qbulma}, i.e.,   $S_{k4l4}+Q\delta_{kl}=0$, we obtain
 \[\rho_0  \frac{\partial V_k}{\partial v_l} + \frac{\partial S_{k4}}{\partial v_l} +Q \delta_{kl}=0.
 \]  Differentiating equation \eqref{sk4} with respect to $v_l$ gives the same equation above. 
  Substituting the previously determined components into this equation yields a polynomial identity. Equating the constant term and the coefficients of $\rho_0$, we obtain $f_{k4}=0$ and $\alpha_k=0$, respectively. The remaining part of the above equation  yields: 
  \[
  (Q+2 a_1 \rho_0)\delta_{kl}-\rho_0(g_{kl}-b_{kl})=0.
  \]
Evaluating the off-diagonal and diagonal components of this tensor identity  determines the difference
$g_{kl}-b_{kl}$ completely. For $k\neq l$, the left hand side reduces to
$\rho_0(g_{kl}-b_{kl})=0$, yielding $g_{kl}=b_{kl}$. For the diagonal components ($k=l$), we obtain
\[
Q = \rho_0 (g_{\underline{k}\,\underline{k}}-b_{\underline{k}\,\underline{k}} -2 a_1),
\]
where underlined repeated indices are not summed over. Because  $Q$ is a scalar quantity, the right hand side must be independent of the index $k$.  Both conditions are satisfied by concluding that
\be
\label{gb}
g_{kl}-b_{kl}=\gamma \delta_{kl},
\end{equation}
where $\gamma$ is a constant. 

We have now determined  the final forms of all the relevant components of the isovector field associated with the equivalence groups 
of the equations of inhomogeneous hyperelasticity. They are summarized below.
\begin{align}
& \phi_K= \phi_K (X_L), \quad \phi_{K,KL}=0,\label{isophi}\\
&\psi=a_1 t+a_2,\label{isot}\\
&U_k=  b_{kl} x_l +c_k(X_L) +\beta_k,\label{isou}\\
&V_k= ( b_{kl} +a_1 \delta_{kl}) v_l,\label{isovk}\\
&V_{kK}= ( b_{kl} \delta_{KL}+ \phi_{L,K} \delta_{kl})F_{lL}+ c_{k,K} (X_L),\label{isovkk}\\
&Q = \rho_0 (\gamma  -2 a_1),\label{isoq}\\
\begin{split}
&S_{kK}= ( g_{kl} \delta_{KL} -\phi_{K,L} \delta_{kl}) \Sigma_{lL} + e_{KLM} e_{lmn} c_{kn} F_{lL} F_{mM} \\  & \quad + e_{KLM} c_{Mkl}(X_N) F_{lL}  + f_{kK}(X_N). \label{isoskk}
\end{split}
\end{align}
The functions $f_{kK}$ and $c_{Mkl}$ satisfy the relations \eqref{f2ler},   constant tensors $g_{kl}$, $b_{kl}$ and $\gamma$ satisfy \eqref{gb}.  Equivalence transformations associated to the field equations are determined by solving the following system of ODEs
\be
\begin{aligned}
\label{ode}
&\frac{\mathrm{d}\overline X_K}{\mathrm{d}\epsilon} = -\phi_K(\overline X_L), 
\ \frac{\mathrm{d}\overline{t}}{\mathrm{d}\epsilon} = -\psi(\overline{t}), 
\ \frac{\mathrm{d}\overline x_k}{\mathrm{d}\epsilon} = U_k(\overline X_L,  \overline x_l),\ \frac{\mathrm{d}\overline \rho_0}{\mathrm{d}\epsilon} =Q(\overline \rho_0),
  \\
& \frac{\mathrm{d}\overline v_k}{\mathrm{d}\epsilon} = V_k(\overline v_l), \
\frac{\mathrm{d}\overline  F_{kK}}{\mathrm{d}\epsilon} = V_{kK}(\overline X_L, \overline  F_{lL}), 
\ \frac{\mathrm{d}\overline \Sigma_{kK} }{\mathrm{d}\epsilon} = S_{kK}(\overline X_L, \overline  F_{lL}, \overline  \Sigma_{lL}), 
\end{aligned}
\ee
under the initial conditions
\begin{equation}
\begin{aligned}
\label{initial}
& \overline X_K(0)=X_K,
\ \overline{t}(0)= t, \ \overline x_k (0)= x_k,\  \overline \rho_0 (0)=\rho_0, \\
&  \overline v_k (0)= v_k,\  \overline F_{kK} (0)= F_{kK}, \ \overline \Sigma_{kK} (0)= \Sigma_{kK}.
  \end{aligned}
\end{equation}
Here overline denotes the transformed variables and $\epsilon$ is the group parameter. System \eqref{ode} is integrable as long as the functions therein are continuously differentiable. Note that  the transformed  Piola-Kirchhoff tensor $\overline  \Sigma_{kK}$ is evaluated by integrating  the isovector components $ S_{kK}$. Therefore, mappings  from  homogeneous materials to  inhomogeneous materials  cannot be established unless $S_{kK}$ depend on $X_L$.  This requires at least one of the  functions  $c_{Mkl},\ f_{kK}$ or $\phi_{K,L}$ to be nonconstant.  Which of these generators produce genuine mappings between materials with different structural properties, particularly between homogeneous and inhomogeneous materials? In the next section, we discuss such transformations and illustrate them with several examples.
\section{Main Results}
We consider \eqref{field} as a starting point and refer it as the reference system. We apply the equivalence transformations to \eqref{field} to obtain the  transformed system.  It is clear that the equivalence transformations are not unique because the isovector components (\ref{isophi})-(\ref{isoskk}) contain some arbitrary functions of the material coordinates. Consequently,  the reference system is mapped into different families of equations corresponding to different choices of these arbitrary functions. From a theoretical standpoint, transforming a complicated system of equations into a simpler one offers significant advantages, facilitating either a direct solution or a deeper understanding of its behavior. However,  an additional physical question must also be addressed: does the transformed system still describe a hyperelastic material?   To answer this question, we investigate the conditions under which the transformed system admits a scalar potential.  The following theorem states the additional constraints that the isovector components must satisfy in order for the transformed Piola-Kirchhoff stress tensor $\overline {\Sigma}_{kK}$ to be derivable from a scalar strain energy potential $\overline W (\overline X,\overline F)$.  
\begin{theorem}
  \label{theorem1}
Suppose the reference Piola-Kirchhoff stress tensor $\Sigma_{kK}=\partial W /\partial F_{kK}$ is derivable from a  strain energy function $W(\bf X,\bf F)$. Let $\overline \Sigma_{kK}(\epsilon)$ denote the transformed Piola-Kirchhoff tensor under one parameter family of equivalence transformations obtained by the solution of \eqref{ode}. Then $\overline \Sigma_{kK}(\epsilon)$ is derivable from a scalar potential
$\overline W (\overline{\mathbf X},\overline{\mathbf F})$ for all $\epsilon$,
if and only if
\begin{enumerate} [label=\roman*.]
  \item $g_{kl}$ and $b_{kl}$ share the same antisymmetric part
        $\omega_{kl}=-\omega_{lk}$, and their symmetric parts are proportional to the identity tensor, namely
        \[
          g_{kl}=\omega_{kl}+\frac{c+\gamma}{2}\,\delta_{kl},\qquad
          b_{kl}=\omega_{kl}+\frac{c-\gamma}{2}\,\delta_{kl},
        \]
        where $\gamma$ is the constant of \eqref{gb} and $c$ is an arbitrary constant.
    \item $c_{kn}=\lambda \delta_{kn}$ for a constant $\lambda$. 
  \item $c_{Mkl}=-c_{Mlk}$. 
\end{enumerate}
\end{theorem}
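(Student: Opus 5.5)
We will use the classical integrability criterion for a potential. For each fixed $\overline{\mathbf X}$, the tensor $\overline\Sigma_{kK}$ is a gradient $\partial\overline W/\partial\overline F_{kK}$ on the (simply connected) space of deformation gradients if and only if $\overline s_{kKlL}=\partial\overline\Sigma_{kK}/\partial\overline F_{lL}$ is symmetric under the exchange of index pairs $(kK)\leftrightarrow(lL)$. The sufficiency of this condition is the Poincar\'e lemma; the $\overline{\mathbf X}$-dependence is harmless because it enters only as a parameter. In the extended coordinate cover $\mathcal K$, the hypothesis on the reference system says that the point lies on the submanifold $s_{kKlL}=s_{lLkK}$. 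Since the transformed quantities are obtained by integrating the prolonged field \eqref{iso2}, the property ``derivable from a potential for all $\epsilon$'' is equivalent to invariance of this submanifold under the flow. By the standard infinitesimal criterion this means
\[
\tilde V\left(s_{kKlL}-s_{lLkK}\right)=S_{kKlL}-S_{lLkK}=0 \quad\text{whenever } s_{kKlL}=s_{lLkK}.
\]
Necessity follows by differentiating the symmetry of $\overline s_{kKlL}(\epsilon)$ at $\epsilon=0$, which must hold for every $W$, i.e. for arbitrary symmetric $s$. Sufficiency follows because tangency of $\tilde V$ to the submanifold keeps integral curves inside it.

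The main computation is to make $S_{kKlL}$ explicit from \eqref{SkKlL} and \eqref{gkK}, using the final components \eqref{isovkk} and \eqref{isoskk}. We have
\[
\frac{\partial G_{kK}}{\partial F_{lL}}=-s_{kKmL}b_{ml}-s_{kKlM}\phi_{L,M}+\frac{\partial S_{kK}}{\partial F_{lL}},\qquad \frac{\partial G_{kK}}{\partial\Sigma_{nM}}=g_{kn}\delta_{KM}-\phi_{K,M}\delta_{kn}.
\]
Forming $S_{kKlL}-S_{lLkK}$ and using the symmetry of $s$, the two $\phi$-terms cancel each other. The $g$- and $b$-terms collect into
\[
(g_{kn}+b_{nk})\,s_{nKlL}-(g_{ln}+b_{nl})\,s_{kKnL}.
\]
These must vanish for an arbitrary pair-symmetric $s$. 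Choosing suitable elementary symmetric $s$ shows that this forces $g_{kn}+b_{nk}=c\,\delta_{kn}$ for a constant $c$. Combining this with \eqref{gb}, $g_{kn}-b_{kn}=\gamma\delta_{kn}$, and splitting into symmetric and antisymmetric parts gives: equal antisymmetric parts $\omega_{kl}$, together with symmetric parts $\tfrac{c\pm\gamma}{2}\delta_{kl}$. This is condition i.

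The remaining requirement is that $\partial S_{kK}/\partial F_{lL}$ itself be pair-symmetric, and it is independent of $s$. The $f_{kK}(\mathbf X)$ term contributes nothing. The linear term contributes $e_{KLM}c_{Mkl}$, and since $e_{KLM}$ is antisymmetric in $K,L$, pair symmetry requires $e_{KLM}(c_{Mkl}+c_{Mlk})=0$. Contracting with $e_{KLR}$ then gives condition iii. The quadratic term contributes $2e_{KLM}e_{lmn}c_{kn}F_{mM}$, which must be pair-symmetric for all $F$. This yields $e_{lmn}c_{kn}+e_{kmn}c_{ln}=0$ for all $k,l,m$. I expect this to be the main obstacle, namely showing that it forces $c_{kn}=\lambda\delta_{kn}$. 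I would try first a contraction with $e_{lmp}$ and then a trace, to reduce it to $2c_{kp}=\delta_{kp}c_{nn}$ up to sign conventions. I would check this directly on index choices such as $k=l$ (which kills the diagonal-free parts) and $k\ne l$ (which kills the off-diagonal parts).

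Finally, I will verify the converse. Under i–iii every contribution to $S_{kKlL}-S_{lLkK}$ vanishes identically on the symmetric submanifold. The submanifold is therefore invariant, and the Poincar\'e lemma applied at each $\epsilon$ produces $\overline W(\overline{\mathbf X},\overline{\mathbf F})$.
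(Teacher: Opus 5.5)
Your proposal is correct and is the paper's proof written infinitesimally. The paper differentiates $A_{kK,lL}=\partial\overline\Sigma_{kK}/\partial\overline F_{lL}-\partial\overline\Sigma_{lL}/\partial\overline F_{kK}$ along the flow by the chain rule. That derivative is exactly your $S_{kKlL}-S_{lLkK}$ built from \eqref{SkKlL} and \eqref{gkK}. The paper then finds the same cancellation of the $\phi_{K,M}$ terms and the same condition $g_{km}+b_{mk}=c\,\delta_{km}$. Its sufficiency argument, a linear homogeneous ODE system for $A_{kK,lL}$ with zero initial data, is your tangency argument made explicit. You evaluate necessity at $\epsilon=0$, where $s$ is the Hessian of an arbitrary $W$; this justifies the ``arbitrary symmetric $s$'' step slightly more cleanly than the paper's phrasing at general $\epsilon$.

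The one step you predicted wrongly is the contraction behind condition ii. Multiply $e_{lmn}c_{kn}+e_{kmn}c_{ln}=0$ by $e_{lmp}$. The first term gives $2c_{kp}$, but the second gives $e_{lmp}e_{kmn}c_{ln}=(\delta_{pn}\delta_{lk}-\delta_{pk}\delta_{ln})c_{ln}=c_{kp}-\delta_{kp}c_{ll}$. So the identity is $3c_{kp}=c_{ll}\,\delta_{kp}$, which gives $c_{kn}=\lambda\delta_{kn}$ with $\lambda=c_{ll}/3$ immediately, with no trace needed. The form you anticipated, $2c_{kp}=\delta_{kp}c_{nn}$, would on tracing give $c_{nn}=0$ and hence $c_{kn}=0$. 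That wrongly excludes $\lambda\neq 0$, although $\lambda(e_{lmk}+e_{kml})=0$ shows such $c_{kn}$ are admissible. Your fallback by index choices does work. Taking $k=l$ gives $e_{kmn}c_{kn}=0$ (no sum on $k$), which kills the off-diagonal entries. Taking $(k,l,m)=(1,2,3)$ and its cyclic permutations gives $c_{11}=c_{22}=c_{33}$.

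A last minor point concerns your Poincar\'e-lemma step. If by the space of deformation gradients you mean $\{\det\overline{\mathbf F}>0\}$, it is not simply connected: it retracts onto $SO(3)$, so $\pi_1=\mathbb{Z}_2$. However, its first real de Rham cohomology vanishes, so closed $1$-forms there are still exact and the step stands. The paper leaves this point implicit.
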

\begin{proof}
  \emph{Necessity.}
For  the existence of  a scalar strain energy function $\overline W (\overline X,\overline F)$ such that 
$ 
\overline \Sigma_{kK}=\partial \overline W/ \partial \overline F_{kK}
$,
the necessary  condition  is that the mixed second derivatives of $\overline W$ are equal, i.e., that the classical integrability condition is satisfied for all $k,K,l,L$:
\[
\dfrac{\partial \overline \Sigma_{kK}}{\partial \overline F_{lL}} =\dfrac{\partial \overline \Sigma_{lL}}{\partial \overline F_{kK}}.
\]
Let us define 
\begin{equation}
\label{Adef}
A_{kK,lL} = \frac{\partial \overline{\Sigma}_{kK}}{\partial \overline{F}_{lL}} - \frac{\partial \overline{\Sigma}_{lL}}{\partial \overline{F}_{kK}}
\end{equation} 
which we require to vanish identically. It  holds trivially at $\epsilon=0$. We now determine the conditions under which it continues to hold for arbitrary $\epsilon$.   We  consider 
$\mathrm{d}  A_{kK,lL}/ \mathrm{d}\epsilon=0.
$ Since $ \overline \Sigma_{kK} $ are functions of both  $\overline {\bf F}_{kK}(\epsilon)$ and  $\epsilon$  via \eqref{ode},   we shall write
\begin{align}
\dfrac{\mathrm{d} }{\mathrm{d} \epsilon} \left(\dfrac{\partial \overline \Sigma_{kK}}{\partial \overline F_{lL}}\right) =& \dfrac{\partial}{\partial \overline F_{lL}}\left(\dfrac{\mathrm{d} \overline \Sigma_{kK}}{\mathrm{d} \epsilon}\right)
+\frac{\partial}{\partial \overline \Sigma_{mM}} \left(\frac{\mathrm{d} \overline \Sigma_{kK}}{\mathrm{d} \epsilon} \right) \frac{\partial \overline \Sigma_{mM}}{\partial \overline F_{lL}} - \dfrac{\partial \overline \Sigma_{kK}}{\partial \overline F_{mM}} \dfrac{\partial}{\partial \overline F_{lL}}\left(\dfrac{\mathrm{d} \overline F_{mM}}{\mathrm{d} \epsilon}\right) \nonumber \\
= &\dfrac{\partial S_{kK }}{\partial \overline F_{lL}} +\frac{\partial S_{kK}} {\partial \overline \Sigma_{mM}}   \frac{\partial \overline \Sigma_{mM}}{\partial \overline F_{lL}} - \dfrac{\partial \overline \Sigma_{kK}}{\partial \overline F_{mM}}  \dfrac{\partial V_{mM}}{\partial \overline F_{lL}} \nonumber\\
= & \dfrac{\partial S_{kK }}{\partial \overline F_{lL}} + \frac{\partial S_{kK}} {\partial \overline \Sigma_{mM}}   \frac{\partial \overline \Sigma_{mM}}{\partial \overline F_{lL}} - (b_{ml}\delta_{ML}+\phi_{L,M} \delta_{ml}) \dfrac{\partial \overline \Sigma_{kK}}{\partial \overline F_{mM}}\label{integrability}.
\end{align} 
 Note that the functional dependence of $S_{kK}$ on $F_{lL}$ are at some terms explicit but at some terms implicit. We have   $S_{kK}=S_{kK}(\overline \Sigma_{lL}, \overline F_{lL},\overline X_L) $ and $\overline \Sigma_{kK}$ also depend on $\overline F_{lL}$.    
Thus, we analyze the derivatives appearing in \eqref{integrability} term by term  by decomposing   $S_{kK}$ in \eqref{isoskk} into  four parts: 
\[
S_{kK}= S_{kK}^{(I)}+ S_{kK}^{(II)}+ S_{kK}^{(III)} + S_{kK}^{(IV)}.
\] 
\begin{enumerate}[label=\alph*)]
  \item 
The   first part is defined as $S_{kK}^{(I)}=(g_{kl}\delta_{KL}-\phi_{K,L} \delta_{kl}) \overline\Sigma_{lL}$. By the chain rule,
\[
\frac{\partial S_{kK}^{(I)}}{\partial \overline{ F}_{lL}}+ \frac{\partial S_{kK}^{(I)}} {\partial \overline \Sigma_{mM}}   \frac{\partial \overline \Sigma_{mM}}{\partial \overline F_{lL}} =  (g_{km}\delta_{KM}-\phi_{K,M}(\overline X)\delta_{km}) \frac{\partial \overline \Sigma_{mM}}{\partial \overline F_{lL}} =  \mathcal{L}_{kK,mM} \frac{\partial \overline \Sigma_{mM}}{\partial \overline F_{lL}}
\]
where we  defined  $\mathcal{L}_{kK,mM}$:
\begin{equation}
  \label{Ldef}
\mathcal{L}_{kK,mM} = g_{km}\delta_{KM}-\phi_{K,M}\delta_{km}.
\end{equation}
The  contribution of $S_{kK}^{(I)}$ to the integrability condition  is  
\begin{equation}
\label{term1}
\mathcal{L}_{kK,mM} \dfrac{\partial \overline \Sigma_{mM}}{\partial \overline F_{lL}} - \mathcal{L}_{lL,mM} \dfrac{\partial \overline \Sigma_{mM}}{\partial \overline F_{kK}}.
\end{equation}

\item The second part   $S_{kK}^{(II)}=e_{KRM} e_{rmn} c_{kn} \overline F_{rR} \overline F_{mM}$ is independent of $\overline \Sigma_{mM}$. Since it is quadratic in $\overline F_{lL}$,  differentiation with respect to $\overline F_{lL}$ yields   
\[
\dfrac{\partial S_{kK}^{(II)}}{\partial \overline F_{lL} }=  2 e_{KLM} e_{lmn} c_{kn} \overline F_{mM}.  
\]
Hence its contribution to the integrability condition is  
\begin{equation}
\label{term2}
 2 e_{KLM} (e_{lmn} c_{kn}+e_{kmn} c_{ln}) \overline F_{mM}.
\end{equation}

\item The third part   $S_{kK}^{(III)}= e_{KLM} c_{Mkl}(\overline X_N) \overline F_{lL}$  is also independent of $\overline \Sigma_{mM}$.  Its contribution is
\begin{equation}
  \label{term3}
e_{KLM}(c_{Mkl}(\overline X_N) + c_{Mlk}(\overline X_N)). 
\end{equation}
\item The last part $S_{kK}^{(IV)}=f_{kK}(\overline X_L)$ depends on neither $\overline F_{lL}$ nor $\overline \Sigma_{kK}$. Hence,  it makes no contribution to the integrability condition.
\end{enumerate}
Substituting \eqref{term1}-\eqref{term3} into \eqref{integrability} and subtracting
the same expression with the pairs $(kK)$ and $(lL)$ interchanged, we obtain
\begin{equation}
  \label{integrabilityson}
  \begin{split} 
  &\mathcal{L}_{kK,mM} \mathcal{W}_{mM,lL} - \mathcal{L}_{lL,mM}  \mathcal{W}_{mM,kK} - \mathcal{M}_{lL,mM} \mathcal{W}_{kK,mM} + \mathcal{M}_{kK,mM}\mathcal{W}_{lL,mM}\\
  &\quad + 2 e_{KLM} (e_{lmn} c_{kn}+e_{kmn} c_{ln}) \overline F_{mM}  + e_{KLM}(c_{Mkl}(\overline X) + c_{Mlk}(\overline X))=0 
\end{split}
\end{equation}
where  we  set
\begin{equation}
\label{Mdef}
\mathcal{W}_{mM,lL}=\dfrac{\partial \overline \Sigma_{mM}}{\partial \overline F_{lL}},\qquad \mathcal{M}_{kK,mM}=b_{mk}\delta_{MK}+\phi_{K,M}\delta_{km}.
\end{equation}
By \eqref{Adef}, $\mathcal{W}_{mM,lL}-\mathcal{W}_{lL,mM}=A_{mM,lL}$. Imposing $A_{mM,lL}=0$ 
 means $\mathcal{W}_{mM,lL}=\mathcal{W}_{lL,mM}$, and \eqref{integrabilityson} 
reduces to
\begin{align*}
  &  ( \mathcal{L}_{kK,mM} + \mathcal{M}_{kK,mM})  \mathcal{W}_{mM,lL} -  (\mathcal{L}_{lL,mM} + \mathcal{M}_{lL,mM})  \mathcal{W}_{mM,kK} \\
 & \quad + 2 e_{KLM} (e_{lmn} c_{kn}+e_{kmn} c_{ln}) \overline F_{mM} + e_{KLM}(c_{Mkl}(\overline X) + c_{Mlk}(\overline X))=0.
\end{align*}
Since the above identity must hold for every admissible $\overline\Sigma_{kK}$ and for every $\overline F_{lL}$, it follows that   
 \begin{enumerate}[label=\textit{\roman*}.]
  \item $( \mathcal{L}_{kK,mM} + \mathcal{M}_{kK,mM})  \mathcal{W}_{mM,lL} -  (\mathcal{L}_{lL,mM} + \mathcal{M}_{lL,mM})  \mathcal{W}_{mM,kK} = 0,$
  \item $ e_{KLM} (e_{lmn} c_{kn}+e_{kmn} c_{ln})=0,$
  \item $e_{KLM}(c_{Mkl}(\overline X) + c_{Mlk}(\overline X))=0.$
\end{enumerate}
By the definitions of $\mathcal{L}_{kK,mM}$ and $\mathcal{M}_{kK,mM}$ given by \eqref{Ldef} and \eqref{Mdef} respectively, one finds that $\mathcal{L}_{kK,mM} + \mathcal{M}_{kK,mM} = (g_{km}+b_{mk}) \delta_{KM}$. Substituting this  expression into {\it (i)}  gives
\[
  (g_{km}+b_{mk}) \mathcal{W}_{mK,lL} - (g_{lm}+b_{ml}) \mathcal{W}_{mL,kK}  = 0.
\]
Using the symmetry $\mathcal{W}_{mL,kK}=\mathcal{W}_{kK,mL}$ and  defining $h_{km} = g_{km}+b_{mk}$, the above equation becomes
$h_{km} \mathcal{W}_{mK,lL} - h_{lm} \mathcal{W}_{kK,mL} = 0$. Since this identity must hold for every admissible  tensor $\mathcal{W}$, the  coefficient tensor must satisfy $h_{km} = c \delta_{km}$ for some constant $c$. Hence, 
\[
  g_{km} + b_{mk} = c \delta_{km}.
\]
Combining this  with $g_{km}-b_{km}=\gamma \delta_{km}$ given by \eqref{gb}, we deduce that $g_{kl}$ and $b_{kl}$ share the same antisymmetric part, denoted by  $\omega_{kl}=-\omega_{lk}$. Hence,
\[
  g_{kl}=\omega_{kl}+\frac{c+\gamma}{2}\,\delta_{kl},
  \qquad
  b_{kl}=\omega_{kl}+\frac{c-\gamma}{2}\,\delta_{kl}.
\]
 Condition {\it(ii)} holds for every choice of  $K,L,M$,  thus it is equivalent to $e_{lmn} c_{kn}+e_{kmn} c_{ln}=0$. Multiplying this by $e_{lmp}$ and using the properties of the permutation symbol, we obtain $3 c_{kp}-c_{ll} \delta_{kp}=0$. This  implies that 
$$c_{kn}=\lambda \delta_{kn}$$ where $\lambda$ is a constant.
Finally, condition {\it (iii)} requires that $c_{Mkl }$ must be antisymmetric in the lower indices, i.e. $c_{Mkl}=-c_{Mlk}$. \\
\emph{Sufficiency.} Suppose that conditions \emph{(i)}-\emph{(iii)} hold. By \emph{(ii)},
$e_{lmn} c_{kn}+e_{kmn} c_{ln}=\lambda(e_{lmk}+e_{kml})=0$, and by \emph{(iii)},
$c_{Mkl}+c_{Mlk}=0$.  Therefore, by \emph{(ii)} and \emph{(iii)}, the last two terms on the left-hand side of \eqref{integrabilityson} vanish.
Hence, we have
\[
\frac{\mathrm{d}A_{kK,lL}}{\mathrm{d}\epsilon} = \mathcal{L}_{kK,mM} \mathcal{W}_{mM,lL} + \mathcal{M}_{kK,mM} \mathcal{W}_{lL,mM}-\mathcal{L}_{lL,mM} \mathcal{W}_{mM,kK} - \mathcal{M}_{lL,mM} \mathcal{W}_{kK,mM}.
\] 
Upon substituting the definitions $\mathcal{L}_{kK,mM}$ and $\mathcal{M}_{kK,mM}$ given by \eqref{Ldef} and \eqref{Mdef} into the above equation and  the condition \emph{(i)}, we obtain
\begin{align*}
  \frac{\mathrm{d}A_{kK,lL}}{\mathrm{d}\epsilon} =& \omega_{km}\mathcal{W}_{mK,lL} + \omega_{mk}\mathcal{W}_{lL,mK} - \omega_{lm}\mathcal{W}_{mL,kK} - \omega_{ml}\mathcal{W}_{kK,mL} + \gamma(\mathcal{W}_{kK,lL} - \mathcal{W}_{lL,kK}) \\
  &- \phi_{K,M}(\mathcal{W}_{kM,lL} - \mathcal{W}_{lL,kM}) + \phi_{L,M}(\mathcal{W}_{lM,kK} - \mathcal{W}_{kK,lM}).
\end{align*}
Using that $A_{kK,lL} = \mathcal{W}_{kK,lL} - \mathcal{W}_{lL,kK}$, we can simplify the above equation as
\[
\frac{\mathrm{d}A_{kK,lL}}{\mathrm{d}\epsilon} = \gamma A_{kK,lL} + \omega_{km}A_{mK,lL} - \omega_{lm}A_{mL,kK} - \phi_{K,M}A_{kM,lL} + \phi_{L,M}A_{lM,kK}\]
which is a linear homogeneous system of ordinary differential equations for the components $A_{kK,lL}(\epsilon)$. The coefficients of this system are continuous, given that $\omega_{kl}$ and $\gamma$ are constants and $\phi_{K,M}$ depends on $\epsilon$ solely through $\overline{\mathbf X}(\epsilon)$. Because the reference material is hyperelastic, the initial condition is $A_{kK,lL}(0)=0$. According to the  uniqueness theorem for linear homogeneous ordinary differential equations, the unique solution to this initial value problem with zero initial data is the trivial solution:
\[A_{kK,lL}(\epsilon) = \frac{\partial \overline{\Sigma}_{kK}}{\partial \overline{F}_{lL}} - \frac{\partial \overline{\Sigma}_{lL}}{\partial \overline{F}_{kK}} = 0 \quad \text{for all } \epsilon.
\] Therefore, the transformed Piola-Kirchhoff stress tensor $\overline{\Sigma}_{kK}(\epsilon)$ is derivable from a scalar potential $\overline{W}(\overline{\mathbf{X}}, \overline{\mathbf{F}})$ for all $\epsilon$, which completes the proof.
\end{proof}
 By the  conditions of Theorem \ref{theorem1}, the isovector components $S_{kK}$ reduce to 
\begin{equation}
\label{isoskk2}
S_{kK}= \left( g_{kl} \delta_{KL} -\phi_{K,L} \delta_{kl} \right) \Sigma_{lL} + \lambda e_{KLM} e_{klm}  F_{lL}F_{mM}+e_{KLM} c_{Mkl}( X_N)  F_{lL} + f_{kK}( X_L),
\end{equation}
 where $\phi_K$ satisfy $\phi_{K,KL}=0$, $c_{Mkl}$ are antisymmetric in the lower indices and $f_{kK}$ are  arbitrary functions of $ X_L$ satisfying $f_{kK,K}=0$.
 \begin{remark}
  \label{remark1}
 The equivalence transformations between the homogeneous and
inhomogeneous members of the family \eqref{field} are possible not only for nonconstant isovector
components $\phi_K$, but also for nonconstant $c_{Mkl}$ and/or nonconstant $f_{kK}$. 
\end{remark}
\begin{remark}
  \label{remark2}
In the reduced form \eqref{isoskk2} of the isovector components $S_{kK}$, if $\lambda=0$, the nonlinear dependence on the deformation gradient tensor remains only in $\Sigma_{kK}$. If $\lambda\neq 0$, the term 
$\lambda\,e_{KLM}e_{klm} F_{lL} F_{mM}$ introduces an additional nonlinear
dependence  on $ F$. Consequently, $\lambda$ modifies  the constitutive nonlinearity of the material.
\end{remark}
\begin{remark}
  \label{remark3}
Theorem~\ref{theorem1} establishes that the transformed Piola-Kirchhoff stress tensor is derivable from a scalar potential $\overline W(\overline{\mathbf X},\overline{\mathbf F})$, satisfying the mathematical definition of a hyperelastic material. However, for this potential to describe a physically admissible material, it must also satisfy the principle of material frame indifference:
$$
\overline W(\overline{\mathbf X},\mathbf R\overline{\mathbf F}) =\overline W(\overline{\mathbf X},\overline{\mathbf F}) \quad\text{for all } \mathbf R\in SO(3).
$$
This constitutive requirement is independent of the equivalence group and must be verified for each subgroup separately (see Example~4). When objectivity fails, the equivalence transformation remains mathematically valid for the system \eqref{field}, but the resulting potential $\overline W$ cannot be interpreted as a physical strain energy function.
\end{remark}
 We now establish a second, purely kinematic, consistency requirement.  The reference system satisfies the compatibility condition \begin{equation}
 \label{kinematic}
\dfrac{\partial F_{kK}}{\partial t} =\dfrac{\partial v_k}{\partial X_K},
 \end{equation} on the deformation gradient and velocity fields.  The following theorem states that this condition must also hold for the transformed system.
\begin{theorem}
  \label{theorem2}
  Let $(\overline X_K,\overline t,\overline x_k,\overline v_k,
  \overline F_{kK})$ be the transformed variables obtained from the system
  \eqref{ode} under the initial conditions \eqref{initial}. The
  reference variables satisfy the kinematic relations \eqref{kinematic}. Then,
  for all values of the group parameter $\epsilon$, the transformed variables  satisfy
 \begin{equation}
\label{compability}
\dfrac{\partial \overline F_{kK}}{\partial \overline t} =\dfrac{\partial \overline v_k}{\partial \overline X_K}.
\end{equation}
\end{theorem}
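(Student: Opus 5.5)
The plan is to integrate the system \eqref{ode} explicitly for the base variables. Then I show that the transformed $\overline v_k$ and $\overline F_{kK}$ obtained from \eqref{ode} coincide with the actual derivatives $\partial\overline x_k/\partial\overline t$ and $\partial\overline x_k/\partial\overline X_K$ of the transformed motion. Once this is known, \eqref{compability} reduces to equality of mixed partial derivatives, and that equality is exactly what \eqref{kinematic} expresses in the reference frame.

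\emph{Step 1: integrate the base variables.} By \eqref{isophi}--\eqref{isou}, $\phi_K$ depends only on $\mathbf X$, $\psi=a_1t+a_2$, and $U_k$ is affine in $\mathbf x$ with $\mathbf X$-dependent forcing.
\begin{itemize}
\item The first equation of \eqref{ode} gives a flow $\overline{\mathbf X}=\Phi_\epsilon(\mathbf X)$ that is independent of $t$ and $\mathbf x$. Its Jacobian $J_{KL}=\partial\overline X_K/\partial X_L$ is invertible.
\item The time equation gives $\overline t=e^{-a_1\epsilon}t+\tau(\epsilon)$, with $\tau$ independent of $\mathbf X$ and $\mathbf x$. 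Hence $\partial t/\partial\overline t=e^{a_1\epsilon}$.
\item The equation for $\overline x_k$ is a linear ODE with constant matrix $b$ and forcing $c_k(\Phi_\epsilon(\mathbf X))+\beta_k$. Variation of constants gives $\overline{\mathbf x}=e^{\epsilon b}\mathbf x+\mathbf h(\epsilon,\mathbf X)$.
\end{itemize}
A reference motion $x_k(X_K,t)$ is therefore carried to a transformed motion $\overline x_k(\overline X_K,\overline t)$ by composing with the invertible maps $\mathbf X\mapsto\overline{\mathbf X}$ and $t\mapsto\overline t$. These maps decouple: $\overline{\mathbf X}$ does not involve $t$, and $\overline t$ does not involve $\mathbf X$.

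\emph{Step 2: identify the velocity and deformation gradient.} By the chain rule, the actual derivatives of the transformed motion are
\[
\frac{\partial\overline x_k}{\partial\overline t}=e^{a_1\epsilon}\,(e^{\epsilon b})_{kl}\,v_l,\qquad
\frac{\partial\overline x_k}{\partial\overline X_K}=\Big((e^{\epsilon b})_{kl}F_{lL}+\frac{\partial h_k}{\partial X_L}\Big)(J^{-1})_{LK}.
\]
I claim these equal the solutions $\overline v_k(\epsilon)$ and $\overline F_{kK}(\epsilon)$ of \eqref{ode}. For the velocity, \eqref{isovk} integrates directly to $\overline{\mathbf v}=e^{\epsilon(b+a_1I)}\mathbf v$, which matches. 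For $\overline F_{kK}$ I will differentiate the candidate expression with respect to $\epsilon$, using $\mathrm d J/\mathrm d\epsilon=-\nabla\phi\,J$ and $\partial_\epsilon h=b\,h+c(\Phi_\epsilon)+\beta$. The goal is to verify that the candidate satisfies $\mathrm d\overline F_{kK}/\mathrm d\epsilon=V_{kK}$ with $V_{kK}$ from \eqref{isovkk}. It agrees with $F_{kK}$ at $\epsilon=0$, so uniqueness for the ODE gives the identification.

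I expect Step 2 to be the main obstacle. It is essentially the statement that \eqref{isovkk} is the correct first prolongation, and the bookkeeping of the $J^{-1}$ derivative against the $\phi_{L,K}F_{kL}$ term, together with the $c_{k,K}$ term, must be done carefully. If the explicit route gets messy, I will fall back on an infinitesimal argument. Set
\[
D_{kK}(\epsilon)=\frac{\partial\overline F_{kK}}{\partial\overline t}-\frac{\partial\overline v_k}{\partial\overline X_K},
\]
differentiate in $\epsilon$ using \eqref{ode}, and show that $D_{kK}$ satisfies a linear homogeneous system of the form
\[
\frac{\mathrm dD_{kK}}{\mathrm d\epsilon}=(b_{kl}+a_1\delta_{kl})D_{lK}+\phi_{L,K}D_{kL}.
\]
Here the terms $c_{k,K}(\mathbf X)$ drop out because they are independent of $t$, and the $\mathbf X$-dependence of $\phi$ does not affect $t$-derivatives. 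Since $D(0)=0$ by \eqref{kinematic}, uniqueness gives $D\equiv0$, exactly as in the sufficiency part of Theorem~\ref{theorem1}.

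\emph{Step 3: conclude.} With the identifications of Step 2, both sides of \eqref{compability} can be written through the chain rule, where $\overline t$ depends only on $t$ and $\overline{\mathbf X}$ only on $\mathbf X$:
\[
\frac{\partial\overline F_{kK}}{\partial\overline t}=e^{a_1\epsilon}(e^{\epsilon b})_{kl}\,\frac{\partial F_{lL}}{\partial t}\,(J^{-1})_{LK},\qquad
\frac{\partial\overline v_k}{\partial\overline X_K}=e^{a_1\epsilon}(e^{\epsilon b})_{kl}\,\frac{\partial v_l}{\partial X_L}\,(J^{-1})_{LK}.
\]
The term $\partial h/\partial X$ drops out of the first expression because it is independent of $t$. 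By \eqref{kinematic} the two right-hand sides coincide, which gives \eqref{compability} for every $\epsilon$.
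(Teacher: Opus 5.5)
Your proposal is correct and follows essentially the same route as the paper: both identify $\overline F_{kK}$ and $\overline v_k$ with $\partial\overline x_k/\partial\overline X_K$ and $\partial\overline x_k/\partial\overline t$ by checking that these derivatives satisfy the ODEs $\mathrm d(\cdot)/\mathrm d\epsilon=V_{kK}$ and $V_k$ with the same data at $\epsilon=0$, and then invoking uniqueness. Your $J^{-1}$ bookkeeping is the paper's $\mathcal B^{(I)}+\mathcal B^{(II)}$ computation with $\mathrm dQ_{LK}/\mathrm d\epsilon=\phi_{M,K}Q_{LM}$; beyond that, you integrate the base flows explicitly and write out the final chain-rule step using \eqref{kinematic}, which the paper leaves implicit.
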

\begin{proof}
 Let us define the tensor 
$$
\tilde F_{kK}(\epsilon)= \dfrac{\partial \overline x_k}{\partial \overline X_K}= \dfrac{\partial \overline x_k}{\partial X_L} Q_{LK}$$
 where $Q_{LK}=\partial X_L/\partial \overline X_K$ is the inverse Jacobian of the transformation between the material coordinates. We also define $P_{ML}=\partial \overline X_M/\partial  X_L$ such that $P_{ML} Q_{LK}=\delta_{MK}$.
 At  $\epsilon=0$ we have $\tilde F_{kK}(0)=\overline F_{kK}(0)=F_{kK}$.   
 Taking the total derivative of $\tilde F_{kK}(\epsilon)$ with respect to $\epsilon$  yields
$$
\frac{\mathrm{d}\tilde{F}_{kK}}{\mathrm{d}\epsilon} = \frac{\partial}{\partial X_L}\left(\frac{\mathrm{d}\overline x_k}{\mathrm{d}\epsilon}\right) Q_{LK} + \frac{\partial\overline x_k}{\partial X_L}\frac{\mathrm{d}Q_{LK}}{\mathrm{d}\epsilon}.
$$ 
We define the two terms on the right hand side as 
 $$\mathcal{B}_{kK}^{(I)}=\frac{\partial}{\partial X_L}\left(\frac{\mathrm{d}\overline x_k}{\mathrm{d}\epsilon}\right) Q_{LK},\qquad \mathcal{B}_{kK}^{(II)}=\frac{\partial\overline x_k}{\partial X_L}\frac{\mathrm{d}Q_{LK}}{\mathrm{d}\epsilon}. $$
Recalling $\mathrm{d}\overline{x}_k/\mathrm{d}\epsilon = b_{kl}\overline{x}_l + c_k(\overline{X})+\beta_k$ from    system  \eqref{ode}  we  have
\[
\mathcal{B}_{kK}^{(I)} = b_{kl} \frac{\partial \overline x_l}{\partial X_L} Q_{LK} + \frac{\partial c_k(\overline X)}{\partial \overline X_M} \frac{\partial \overline X_M}{\partial X_L} Q_{LK} =  b_{kl} \tilde F_{lK}+ c_{k,K}.
\]
On the other hand, since  $\mathrm{d}\overline X_K/\mathrm{d}\epsilon = -\phi_K(\overline X)$, it follows that 
$$\frac{\mathrm{d} P_{ML}}{\mathrm{d}\epsilon} = - \frac{\partial \phi_N}{\partial \overline X_L} P_{MN}.$$ Differentiating  $P_{ML}Q_{LK}=\delta_{MK}$ with respect to $\epsilon$,  and  using the above relation, we obtain 
$
\mathrm{d}Q_{LK}/\mathrm{d}\epsilon = \phi_{M,K} Q_{LM}.
$  Substituting this expression into $\mathcal{B}_{kK}^{(II)}$ yields 
\[
\mathcal{B}_{kK}^{(II)} = \dfrac{\partial \overline x_k}{\partial X_L} Q_{LN}\phi_{N,K} = \tilde F_{kN} \phi_{N,K}.
\]
Taking the sum of $\mathcal{B}_{kK}^{(I)}$ and $\mathcal{B}_{kK}^{(II)}$:
\[
\frac{\mathrm{d}\tilde{F}_{kK}}{\mathrm{d}\epsilon} = b_{kl} \tilde F_{lK}+ c_{k,K} + \tilde F_{kN} \phi_{N,K},
\] 
which  is exactly the same as the isovector components $V_{kK}$ given by \eqref{isovkk}. Hence, 
\begin{equation}
\label{onF}
\dfrac{\mathrm{d} \tilde F_{kK}}{\mathrm{d}\epsilon}=V_{kK} \big|_{\overline F=\tilde{F}}.
\end{equation}
To examine the right hand side of \eqref{compability},  we define  
\[
\tilde v_k(\epsilon)=\dfrac{\partial \overline x_k}{\partial \overline t}.\]
It follows from \eqref{ode} that the solution   $t$ satisfies $\mathrm{d} t/\mathrm{d}\overline t =e^{a_1 \epsilon}$. Using  the chain rule,  $\tilde v_k(\epsilon)=\dfrac{\partial \overline x_k}{\partial  t} \dfrac{\partial t}{\partial \overline t}  $ we obtain 
$\tilde v_k=e^{a_1 \epsilon} \partial \overline x_k/ \partial t.$
 Differentiating with respect to $\epsilon$, by also using \eqref{isou} yields
\[
\dfrac{\mathrm{d} \tilde v_k}{\mathrm{d}\epsilon} = a_1 e^{a_1 \epsilon} \dfrac{\partial \overline x_k}{\partial t} + e^{a_1 \epsilon} \dfrac{\partial }{\partial t} (b_{kl} \overline x_l) =(a_1 \delta_{kl}+b_{kl}) \tilde v_l . 
\] This is equivalent to 
\begin{equation}
\label{onv}\dfrac{\mathrm{d} \tilde v_k}{\mathrm{d} \epsilon} =V_k \big|_{\overline v=\tilde v}.
\end{equation}
We  have established through \eqref{onF} and \eqref{onv} that  $\tilde F_{kK}$ and $\tilde v_k$ satisfy the same ODEs as $\overline F_{kK}$ and $\overline v_k$, respectively, subject to the same initial conditions at $\epsilon=0$. The uniqueness of  solutions implies that \[
\tilde F_{kK}(\epsilon)=\overline F_{kK}(\epsilon),\qquad \tilde v_k(\epsilon)=\overline v_k(\epsilon)
\] for all $\epsilon$. The transformed system satisfies the compatibility condition stated.
\end{proof}
\subsection{Mapping between Homogeneous and Inhomogeneous Hyperelasticity}

As stated in Remark \ref{remark1}, the mechanism producing inhomogeneity is entirely contained in the isovector components $S_{kK}$ given by \eqref{isoskk2}. Whenever at least one of $\phi_{K,L}$, $c_{Mkl}$, $f_{kK}$ is a nonconstant function of the material coordinates, the transformed Piola-Kirchhoff stress tensor acquires an explicit $\overline {\bf X}$ dependence. Consequently, a homogeneous reference material, for which $W=W(\mathbf F)$, may be mapped to a material whose strain energy function depends on the material coordinates as well:
\[
W= W ({\bf F}) \longrightarrow \overline W= \overline W ({\bf \overline X, \overline F}).
\] 
  Here, we will provide some examples by examining subgroups  of the general equivalence groups to determine how such transformations are achieved.   
\begin{example}
\label{example1}
As a simple example, let us consider the case where $\phi_K, c_k, \beta_k, b_{kl}, g_{kl}, a_1, a_2$ are identically zero. A substitution of these into   (\ref{isovkk})-(\ref{isoq}) and \eqref{isoskk2} implies that the only nonzero  isovector components are
\begin{equation*}
     S_{kK}=\lambda e_{KLM} e_{klm}   F_{lL} F_{mM}  + e_{KLM} c_{Mkl}(X_N) F_{lL} + f_{kK}(X_N).
\end{equation*}
  By solving  the  ODE system   \eqref{ode} under the initial conditions  \eqref{initial}  we  obtain the    equivalence transformations corresponding to this subgroup as
\begin{align}
  & \overline X_K=X_K,\quad  \overline t=t, \quad \overline x_k=x_k, \quad \overline v_k=v_k,\quad  \overline F_{kK}= F_{kK}, \nonumber\\
& \overline \Sigma_{kK}= \Sigma_{kK} (\overline X_L, \overline  F_{lL})  +\left(\lambda e_{KLM}  e_{klm} \overline F_{lL} \overline F_{mM}  + e_{KLM} c_{Mkl}(\overline X_N) \overline F_{lL} + f_{kK}(\overline X_N)\right) \epsilon. \label{sigmaex1} 
\end{align}
It is straightforward to verify that the transformed  Piola-Kirchhoff stress tensor $\overline \Sigma_{kK}$ preserves the invariance of the family of equations \eqref{field}. Indeed, since $
\partial /{ \partial \overline X_K} =\partial/ {\partial X_K},
$ we have
\[
\dfrac{\partial \overline \Sigma_{kK}}{\partial \overline X_K}-\overline \rho_0(\overline X)\dfrac{\partial \overline v_k}{\partial \overline t}=0 \longrightarrow  \dfrac{\partial \Sigma_{kK}}{\partial X_K}-\rho_0(X)\dfrac{\partial v_k}{\partial t}=0.
\]
Suppose that  the reference material is homogeneous. It has a strain energy function that depends   only on the deformation gradient tensor, i.e.,   $W= W(\bf F)$.  The transformed strain energy function to be evaluated via \eqref{sigmaex1} depends not only on the deformation gradient tensor but also on the material coordinates.  Moreover, its dependence on $\overline{\bf F}$ may differ from that of the reference material. Therefore the transformation establishes a correspondence between homogeneous and inhomogeneous hyperelastic materials while modifying the material's constitutive structure. As a result, the transformed material may exhibit different physical characteristics, such as a different stiffness or a different response to deformation.

To evaluate the transformed strain energy function, we decompose $\overline{\Sigma}_{kK}$ given by \eqref{sigmaex1}  into four parts. The first part is the reference stress tensor $\Sigma_{kK} (\overline X, \overline {\bf F})$, whose gradient  with respect to $\overline F_{kK}$ is the reference strain energy function $W$. The second part is   $\overline{\Sigma}_{kK}^{(II)} =\lambda e_{KLM} e_{klm} \overline F_{lL} \overline F_{mM}$, whose integration  yields
\[\tfrac{1}{3} \lambda e_{KLM} e_{klm} \overline F_{kK} \overline F_{lL} \overline F_{mM} . 
\]  
The third part is $\overline{\Sigma}_{kK}^{(III)}= e_{KLM} c_{Mkl}(\overline X_N) \overline F_{lL}$, whose integration gives 
\[
 \tfrac{1}{2} (e_{KLM}c_{Mkl}(\overline X_N) \overline F_{lL})\overline F_{kK} . 
\] 
Finally, for the last term, integration  is straightforward.  Therefore, the transformed strain energy function $\overline W$ is obtained as follows
\begin{equation}
  \begin{split}
  \label{sigmabar}
 \overline W=& W ( \hat {\bf F}) +\epsilon \left( \tfrac{1}{3} \lambda e_{KLM} e_{klm} \overline F_{kK} \overline F_{lL} \overline F_{mM}+ \tfrac{1}{2} (e_{KLM}c_{Mkl}(\overline X_N) \overline F_{lL})\overline F_{kK}\right. \\
&\left.\quad  + f_{kK}(\overline X_N) \overline F_{kK} +h(\overline X_N) \right)
\end{split}
\end{equation}
where $h(\overline X_N)$ denotes an arbitrary function of the material coordinates, and $W( \hat {\bf F})$ is the reference strain energy function in terms of the transformed deformation gradient tensor, that is obtained by substituting $\bf F$ in terms of $\overline {\bf F}$.  Since $\overline {\bf F} ={\bf F} $ in this example $W(\hat {\bf F})=W(\overline{\bf F})$.
\end{example} 
  Even with this  simple example, we obtain an equivalence transformation between the governing equations of homogeneous hyperelasticity and inhomogeneous hyperelasticity.  Specifically, the transformed strain energy function becomes explicitly dependent on the material coordinates whenever  $c_{Mkl}$ and/ or $f_{kK}$ depend on $X_N$.  
  \begin{example}
As a slight modification of the previous example, we now choose $c_k= B_{kK}X_K$, while keeping $\phi_K,\ \beta_k,\ b_{kl},\ g_{kl},\ a_1,\ a_2$ identically zero. Here $B_{kK}$ is a constant tensor. In this case, the  changes occur only in the  isovector components:
\begin{equation*}
U_k=  B_{kK} X_K , \quad V_{kK}=  B_{kK}.
\end{equation*} 
Accordingly, the associated equivalence transformations differ from those in the previous example only through the terms involving $c_k$, as follows:
\begin{equation*}
\overline x_k=x_k + \epsilon B_{kK} X_K , \quad \overline F_{kK}= F_{kK} + \epsilon B_{kK} 
\end{equation*}
We know that these transformations are mutually consistent by Theorem \ref{theorem2}. They have a clear physical interpretation. In particular, they represent the addition of a uniform deformation field to the original deformation. Consequently, the isovector components associated with the Piola-Kirchhoff stress tensor,  \(S_{kK}\) are modified, and this may  lead to a corresponding change in the constitutive response of the material. 
\end{example}
 
By choosing other forms of the  infinitesimal generators we may derive various types of equivalence transformations not only  between homogeneous and  inhomogeneous hyperelasticity  but also transformations between materials that have different physical characteristics. 

\subsection{Mappings of the Isotropic Hyperelasticity} 
A continuous medium is isotropic if its physical properties are independent of direction at every material point. The constitutive equations are form-invariant under orthogonal transformations.  Accordingly, the strain energy function satisfies $W ( {\bf C, X})= W (\bf Q C Q^T,X)$ for every orthogonal tensor $\bf Q$ for each fixed $\bf X$. This invariance implies that the strain energy function depends on the Green deformation tensor only through its principal invariants, which remain unchanged under orthogonal transformations. See  \cite{eringensuhubi1974} for more details. Hence, the strain energy function of an isotropic hyperelastic material is given by
\[
W=W(\inv1,\ii,\iii,\bf X)
\]
where $\inv1, \ii, \iii$ are the principal invariants of the Green deformation tensor $\bf C=F^T F$:  
\[
 \inv1= \mathrm{Tr} ~{\bf C},\quad \ii= \tfrac{1}{2}( (\mathrm{Tr}~ {\bf C})^2-\mathrm{Tr}~ {\bf C^2}), \quad \iii= \det \bf C. 
 \]
 They can also be written in terms of the deformation gradient tensor $\bf F$: 
 \begin{equation} 
  \begin{split}
    \label{invf}
 \inv1 =  & F_{kK} F_{kK}, \quad \ii=  \tfrac{1}{2} \left[ (F_{kK} F_{kK})(F_{lL} F_{lL}) - (F_{kK} F_{kL})(F_{lL} F_{lK}) \right], \\
 \iii= & (\det {\bf F})^2 = ( \tfrac{1}{6} e_{KLM} e_{klm}  F_{kK} F_{lL} F_{mM} )^2.
   \end{split} 
  \end{equation}
 Under an equivalence transformation, the natural question is whether the transformed Piola-Kirchhoff stress tensor $\overline{\Sigma}_{kK}$ retains its isotropic form. More specifically, can it still be expressed solely as a function of the invariants of the Green deformation tensor? To answer this question, we examine the form of the  Piola-Kirchhoff stress tensor for an isotropic hyperelastic material. It is given by 
\be 
\label{isopk}
\Sigma_{kK} = 2 \left[ \left( \frac{\partial W}{\partial \inv1} + \inv1 \frac{\partial W}{\partial \ii} \right) F_{kK} - \frac{\partial W}{\partial \ii} F_{kL} F_{mL} F_{mK} + \iii \frac{\partial W}{\partial \iii} F^{-1}_{Kk} \right].
\ee
The difference between the field equations of isotropic hyperelasticity and the  general   hyperelasticity equations is that the dependence of $\Sigma_{kK}$ on the deformation gradient is restricted to the invariants of the Green deformation tensor. Therefore, the infinitesimal generators obtained in the previous section can be applied  by imposing additional constraints resulting from the isotropy condition. 
\begin{example}
\label{example3}
Let us consider as  reference, compressible neo-Hookean material.   The strain energy function is given by
\be 
\label{neo}
W= \frac{\mu}{2}(\inv1-3) + \kappa (\sqrt{\iii}-1)^2,\ee
where $\mu>0$ is the shear modulus and $\kappa>0$ is the bulk modulus. 
The  Piola-Kirchhoff stress tensor is 
\[ \Sigma_{kK}=\mu F_{kK} + 2 \kappa (\sqrt{\iii}-1) \sqrt{\iii} F^{-1}_{Kk}.\]

\begin{proposition}
Under the equivalence subgroup of Example \ref{example1}, subject to the conditions of Theorem \ref{theorem1},  the transformed  Piola-Kirchhoff stress tensor
 is derived from 
\begin{equation}
\label{Wbar}
\overline W(\overline{\mathbf X},\overline{\mathbf F})
= W(\overline{\mathbf F})
+ \epsilon\Big( 2\lambda \sqrt{\iii}
+ \tfrac{1}{2}\, e_{KLM}\, c_{Mkl}(\overline{\mathbf X})\,
      \overline F_{kK}\overline F_{lL}
+ f_{kK}(\overline{\mathbf X})\,\overline F_{kK}+ h(\overline{\mathbf X})\Big)
,
\end{equation}
which is unique up to an arbitrary function $h$ of the material
coordinates.
\end{proposition}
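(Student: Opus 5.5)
Under the subgroup of Example~\ref{example1} the transformation is explicit: $\overline X=X$, $\overline F=F$, and $\overline\Sigma_{kK}$ is given by \eqref{sigmaex1}. With the conditions of Theorem~\ref{theorem1} ($c_{kn}=\lambda\delta_{kn}$, $c_{Mkl}=-c_{Mlk}$), Theorem~\ref{theorem1} already guarantees that $\overline\Sigma_{kK}$ is derivable from some potential $\overline W(\overline{\mathbf X},\overline{\mathbf F})$. The plan is therefore to exhibit that potential explicitly. We split $\overline\Sigma_{kK}$ into the four parts used in Example~\ref{example1} and find a primitive of each with respect to $\overline F_{kK}$, treating $\overline{\mathbf X}$ as a parameter.

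\textbf{Reference part.} Since $\overline F=F$ and $\Sigma_{kK}=\partial W/\partial F_{kK}$ for the neo-Hookean $W$ in \eqref{neo}, the first part is $\partial W(\overline{\mathbf F})/\partial\overline F_{kK}$.

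\textbf{Cubic part.} The key identity is $e_{KLM}e_{klm}\overline F_{lL}\overline F_{mM}=2\,\mathrm{cof}(\overline F)_{kK}=2\det\overline{\mathbf F}\,\overline F^{-1}_{Kk}$. This equals $2\,\partial(\det\overline{\mathbf F})/\partial\overline F_{kK}$ by Jacobi's formula. With \eqref{invf}, $\det\overline{\mathbf F}=\sqrt{\iii}$ on the orientation-preserving branch $\det\overline{\mathbf F}>0$, so a primitive of the second part is $2\lambda\sqrt{\iii}$. It agrees with the form $\tfrac13\lambda e_{KLM}e_{klm}\overline F_{kK}\overline F_{lL}\overline F_{mM}=2\lambda\det\overline{\mathbf F}$ found in \eqref{sigmabar}. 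Writing it through $\iii$ keeps the result in the isotropic-invariant language of this subsection.

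\textbf{Linear and constant parts.} For the third part we differentiate $\tfrac12 e_{KLM}c_{Mkl}\overline F_{kK}\overline F_{lL}$. This gives $\tfrac12 e_{KLM}c_{Mkl}\overline F_{lL}+\tfrac12 e_{LKM}c_{Mlk}\overline F_{lL}$ after relabeling. The second term equals $\tfrac12 e_{KLM}c_{Mkl}\overline F_{lL}$ because both $e$ and $c_M$ flip sign, which is exactly where condition (iii) is used. The fourth part has primitive $f_{kK}(\overline{\mathbf X})\overline F_{kK}$. Summing the pieces, each multiplied by $\epsilon$, gives \eqref{Wbar}.

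\textbf{Uniqueness.} If two potentials have the same $\overline F$-gradient, their difference has zero gradient in $\overline{\mathbf F}$ on the connected domain $\det\overline{\mathbf F}>0$. The difference is therefore a function of $\overline{\mathbf X}$ alone, which is absorbed into $h$.

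The main obstacle I expect is the sign and branch issue in identifying $\det\overline{\mathbf F}$ with $\sqrt{\iii}$, together with the factor bookkeeping in the cofactor identity (the factor $2$ from the two contracted permutation symbols). I would verify the identity by contracting with $\overline F_{kN}$, which gives $2\det\overline{\mathbf F}\,\delta_{KN}$, and restrict to physically admissible deformations with $\det\overline{\mathbf F}>0$.
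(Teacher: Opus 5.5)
Your proposal is correct and follows the paper's proof essentially step for step: you check term by term that each piece of $\overline W$ reproduces the corresponding part of \eqref{sigmaex1}, you use condition (iii) in the relabelling of the $c_{Mkl}$ term, and you get uniqueness from the connectedness of $\{\det\overline{\mathbf F}>0\}$. The only cosmetic difference is that you obtain the cubic term from the cofactor identity and Jacobi's formula, whereas the paper applies the product rule directly to $\tfrac16 e_{KLM}e_{klm}\overline F_{kK}\overline F_{lL}\overline F_{mM}$.
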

\begin{proof}
Since $\overline {\mathbf  F}= \mathbf  F$, for the subgroup in Example \ref{example1}, the principal invariants do not change under the equivalence transformations. 
The transformed first Piola-Kirchhoff stress tensor is the same as given by \eqref{sigmaex1}. It remains to verify that
$\partial\overline W/\partial\overline F_{kK}=\overline\Sigma_{kK}$ term by term. \\
\emph{(i)} By the hyperelasticity of the reference material,
$\partial W(\overline{\mathbf F})/\partial\overline F_{kK}
=\Sigma_{kK}(\overline{\mathbf F})$. \\
\emph{(ii)} Since
$\det\overline{\mathbf F}>0$,  the identity
$\sqrt{\iii}=\det\overline{\mathbf F}
=\tfrac16 e_{KLM}e_{klm}\overline F_{kK}\overline F_{lL}\overline F_{mM}$ implies,
by the product rule, that three terms are obtained. These terms  are identical because the summand is invariant under simultaneous permutations of the index pairs. Hence,
\[
\frac{\partial \sqrt{\iii}}{\partial\overline F_{kK}}
=\frac12\, e_{KLM}e_{klm}\,\overline F_{lL}\overline F_{mM},
\]
so that
$\partial(2\lambda\sqrt{\iii})/\partial\overline F_{kK}$
reproduces the quadratic term of \eqref{sigmaex1}.\\
\emph{(iii)} Let
$\Phi=\frac12\,e_{KLM}\,c_{Mkl}(\overline{\mathbf X})\,
\overline F_{kK}\overline F_{lL}$. Then
\[
\frac{\partial\Phi}{\partial\overline F_{pP}}
=\frac12\,( e_{PLM}\,c_{Mpl}\,\overline F_{lL}
+ e_{KPM}\,c_{Mkp}\,\overline F_{kK}).
\]
Relabelling the indices $(k,K)\to(l,L)$ in the second term and using the antisymmetry of the permutation tensor together with the antisymmetry $c_{Mlp}=-c_{Mpl}$, we get
\[
\frac{\partial\Phi}{\partial\overline F_{pP}}
=e_{PLM}\,c_{Mpl}(\overline{\mathbf X})\,\overline F_{lL},
\]
which is the linear in $\overline{\mathbf F}$ term of \eqref{sigmaex1}.\\
\emph{(iv)} Finally,
$\partial\big(f_{lL}(\overline{\mathbf X})\overline F_{lL}\big)
/\partial\overline F_{kK}=f_{kK}(\overline{\mathbf X})$ and
$\partial h(\overline {\mathbf X})/\partial\overline F_{kK}=0$.

Combining \emph{(i)}-\emph{(iv)} establishes
$\partial\overline W/\partial\overline F_{kK}=\overline\Sigma_{kK}$ for
every $\epsilon$. For uniqueness, suppose that $\overline W_1$ and $\overline W_2$
both satisfy $\partial\overline W_i/\partial\overline F_{kK}=\overline\Sigma_{kK},\ i=1,2$.  Then for each fixed $\overline{\mathbf X}$, their difference has  zero gradient with respect to $\overline {\mathbf F}$ on the connected set
$\{\overline{\mathbf F}:\det\overline{\mathbf F}>0\}$. Hence $\overline W_1 -\overline W_2$ is a function of only $\overline{\mathbf X}$.  
\end{proof}
The terms generated by $c_{Mkl}$ and $f_{kK}$ in \eqref{Wbar} are not frame-invariant under $\overline{\bf F}\to \bf R \overline {\bf F}$. Accordingly, if $c_{Mkl}=0$ and $f_{kK}=0$, the transformed strain energy function \eqref{Wbar} becomes 
\[
\overline W= \frac{\mu}{2}(\inv1-3) + \kappa (\sqrt{\iii}-1)^2 + \epsilon (2 \lambda \sqrt{\iii} + h(\overline{\mathbf X})).
\]
As a result, the material remains isotropic and of neo-Hookean type, but inhomogeneous.
\end{example}
In this context, the more interesting question  is the existence of possible mappings between the equations of isotropic and anisotropic hyperelastic materials.   The structure of  $S_{kK}$ given by \eqref{isoskk2} allows such  transformations between  materials belonging to different symmetry classes, since the  existence of  the transformed strain energy function throughout Theorem \ref{theorem1} has been established.  The following example illustrates such a mapping.
\begin{example} \label{example4}
  As another group of equivalence transformations, let us take $\phi_1=\alpha X_1$, $\phi_2=\phi_3=0$, where $\alpha$ is a nonzero constant. And let  $ c_k, \beta_k, b_{kl}, g_{kl}, a_1, a_2, \lambda,  c_{Mkl}, f_{kK}$ be zero.  Then we have  $\psi=0$, $U_k=0$, $V_k=0$, $Q=0$. 
From the solution of \eqref{ode} under \eqref{initial}, we obtain the equivalence transformations as follows:
\begin{equation}
  \label{ex4trans1}
\overline X_1= e^{-\alpha \epsilon} X_1, \quad \overline X_2=X_2,\quad \overline X_3=X_3,\quad \overline t=t,\quad \overline x_k=x_k, \quad \overline v_k=v_k,\quad \overline \rho_0=\rho_0
\end{equation}
The isovector components $V_{kK}$ \eqref{isovkk} and $S_{kK}$ \eqref{isoskk2}  become  
\[V_{kK}=\alpha \delta_{1K}  F_{k1}, \quad S_{kK}=-\alpha \delta_{1K}  \Sigma_{k1}.\]
 Hence, the components of the transformed deformation gradient tensor and the  Piola-Kirchhoff stress tensor are obtained as follows:
 \begin{equation}
   \label{ex4trans2}
  \overline F_{k1}= e^{\alpha \epsilon} F_{k1}, \  \ \overline F_{k2}=F_{k2},\ \ \overline F_{k3}=F_{k3},\ \ \overline\Sigma_{k1}=e^{-\alpha \epsilon}\Sigma_{k1},\ \ \overline\Sigma_{k2}=\Sigma_{k2}, \ \ \overline\Sigma_{k3}=\Sigma_{k3}.
 \end{equation}
Since 
$
\partial /\partial \overline{X}_1=e^{\alpha \epsilon} \partial / \partial X_1
$, one can easily verify that the system \eqref{field} is invariant under the transformations \eqref{ex4trans1} and \eqref{ex4trans2}:
\[
\frac{\partial \overline{\Sigma}_{kK}}{\partial \overline{X}_K}=e^{\alpha \epsilon}\frac{\partial(e^{-\alpha \epsilon}\Sigma_{k1})}{\partial X_1}+\frac{\partial\Sigma_{k2}}{\partial X_2}+\frac{\partial\Sigma_{k3}}{\partial X_3}
=\frac{\partial\Sigma_{kK}}{\partial X_K}, \qquad \overline\rho_0 \frac{\partial \overline v_k}{\partial\overline t}=\rho_0\,\frac{\partial v_k}{\partial t}.
\]
\begin{proposition}
\label{prop2}
 Under the subgroup of equivalence transformations considered in Example \ref{example4}, the transformed strain energy function is 
\[ \overline W(\overline {\mathbf X},\overline{\mathbf F})= W(\mathbf X(\overline{\mathbf X}), \mathbf F(\overline{\mathbf X},\overline{\mathbf F})) + \epsilon h(\overline{\mathbf X}) = W(\mathbf X(\overline{\mathbf X}), \overline {\mathbf F} {\mathbf P} )  + \epsilon h(\overline{\mathbf X}) \]
where ${\mathbf P}=\mathrm{diag}\{e^{-\alpha \epsilon},1,1\}$ is a diagonal  matrix. 
\end{proposition}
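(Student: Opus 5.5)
We verify directly that the proposed $\overline W$ satisfies $\partial\overline W/\partial\overline F_{kK}=\overline\Sigma_{kK}$, using the explicit transformations \eqref{ex4trans1}--\eqref{ex4trans2}, and then argue uniqueness up to a function of $\overline{\mathbf X}$ as in the previous proposition.

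\emph{Step 1: invert the maps.} From \eqref{ex4trans1} we have $X_1=e^{\alpha\epsilon}\overline X_1$ and $X_2=\overline X_2$, $X_3=\overline X_3$, so $\mathbf X(\overline{\mathbf X})$ is explicit and smooth. From \eqref{ex4trans2} we have $F_{k1}=e^{-\alpha\epsilon}\overline F_{k1}$ and $F_{k2}=\overline F_{k2}$, $F_{k3}=\overline F_{k3}$. In matrix form this is $F_{kK}=\overline F_{kL}P_{LK}$ with $\mathbf P=\mathrm{diag}\{e^{-\alpha\epsilon},1,1\}$, i.e. $\mathbf F=\overline{\mathbf F}\mathbf P$. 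This gives the second equality in the statement.

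\emph{Step 2: differentiate.} Set $\overline W_0(\overline{\mathbf X},\overline{\mathbf F})=W(\mathbf X(\overline{\mathbf X}),\overline{\mathbf F}\mathbf P)$. Holding $\overline{\mathbf X}$ fixed, the chain rule gives
\[
\frac{\partial \overline W_0}{\partial \overline F_{kK}}=\frac{\partial W}{\partial F_{lL}}\frac{\partial F_{lL}}{\partial \overline F_{kK}}=\Sigma_{lL}\,\delta_{lk}P_{KL}=\Sigma_{kL}P_{KL}=P_{\underline K\,\underline K}\,\Sigma_{k\underline K}.
\]
For $K=1$ this equals $e^{-\alpha\epsilon}\Sigma_{k1}=\overline\Sigma_{k1}$. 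For $K=2,3$ it equals $\Sigma_{kK}=\overline\Sigma_{kK}$. This matches \eqref{ex4trans2}, where $\Sigma_{kK}$ is evaluated at $(\mathbf X(\overline{\mathbf X}),\mathbf F(\overline{\mathbf X},\overline{\mathbf F}))$. Adding $\epsilon h(\overline{\mathbf X})$ does not change the $\overline{\mathbf F}$-gradient.

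As a consistency check, the hypotheses of Theorem~\ref{theorem1} hold here: $g=b=0$ gives $c=\gamma=0$, and $\lambda=0$, $c_{Mkl}=0$. So the theorem already guarantees that some potential exists, and the computation above identifies it.

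\emph{Step 3: uniqueness.} If two potentials share the gradient $\overline\Sigma_{kK}$, then for fixed $\overline{\mathbf X}$ their difference has zero $\overline{\mathbf F}$-gradient on the connected domain $\{\det\overline{\mathbf F}>0\}$. This domain is preserved because $\det\mathbf P>0$. Hence the difference depends on $\overline{\mathbf X}$ only, and it can be written as $\epsilon h(\overline{\mathbf X})$; the factor $\epsilon$ ensures $\overline W=W$ at $\epsilon=0$.

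The main obstacle is bookkeeping rather than analysis. One must check that the index placement in $\mathbf F=\overline{\mathbf F}\mathbf P$ (right multiplication, acting on material indices) is consistent with \eqref{ex4trans2}. One must also keep the $\overline{\mathbf X}$-dependence, entering through $\mathbf X(\overline{\mathbf X})$, frozen during differentiation with respect to $\overline{\mathbf F}$. After the proof, I would remark that $\overline W$ is generally anisotropic, because right multiplication by the non-orthogonal $\mathbf P$ breaks isotropy, while frame indifference is preserved since $\mathbf R\overline{\mathbf F}\mathbf P=\mathbf R(\overline{\mathbf F}\mathbf P)$.
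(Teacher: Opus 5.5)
Your proposal is correct and is essentially the paper's proof. The paper's argument is exactly your Step 2: it applies the chain rule through $\mathbf F=\overline{\mathbf F}\mathbf P$ with $\mathbf X$ (equivalently $\overline{\mathbf X}$) held fixed, giving $\Sigma_{kL}P_{KL}=\overline\Sigma_{kK}$. Your explicit inversion of \eqref{ex4trans1}--\eqref{ex4trans2}, the check against Theorem~\ref{theorem1}, and the uniqueness-up-to-$h(\overline{\mathbf X})$ argument (which mirrors the one the paper gives for the neo-Hookean proposition) are valid additions that the paper leaves implicit.
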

\begin{proof}
  With $\mathbf X$ held fixed, $\partial \overline F_{kK} P_{KL} / \partial \overline F_{mM}= P_{ML}\delta_{km}$, so 
  \[
  \frac{\partial \overline W} {\partial \overline F_{kK}} = \left.\frac{\partial W} {\partial F_{kL}}\right|_{\mathbf F=\overline{\mathbf F}\mathbf P} P_{KL}= \overline \Sigma_{kK}.
  \]
\end{proof}
\begin{corollary}
The transformed strain energy function $\overline W$ of Proposition \ref{prop2} remains objective: $\overline W(\overline{\mathbf X},\mathbf R\overline{\mathbf F})=\overline W(\overline{\mathbf X},\overline{\mathbf F})$ for every $\mathbf R\in SO(3)$.
 \end{corollary}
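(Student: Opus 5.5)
The plan is a direct substitution using the explicit form of $\overline W$ from Proposition~\ref{prop2}. Up to the term $\epsilon h(\overline{\mathbf X})$, which does not involve $\overline{\mathbf F}$, we have
\[
\overline W(\overline{\mathbf X},\overline{\mathbf F})=W(\mathbf X(\overline{\mathbf X}),\overline{\mathbf F}\mathbf P),
\]
where $\mathbf P=\mathrm{diag}\{e^{-\alpha\epsilon},1,1\}$ acts on the right, i.e.\ on the material (capital) indices. A rotation $\mathbf R\in SO(3)$ acts on $\overline{\mathbf F}$ from the left, on the spatial (lower case) indices. The key observation is associativity of the matrix product:
\[
(\mathbf R\overline{\mathbf F})\mathbf P=\mathbf R(\overline{\mathbf F}\mathbf P).
\]

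The steps are as follows.

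\begin{enumerate}
\item Write $\overline W(\overline{\mathbf X},\mathbf R\overline{\mathbf F})=W(\mathbf X(\overline{\mathbf X}),\mathbf R\,\overline{\mathbf F}\mathbf P)+\epsilon h(\overline{\mathbf X})$. This uses that $\mathbf X(\overline{\mathbf X})$ is unaffected by $\mathbf R$, since the transformation \eqref{ex4trans1} does not involve $\mathbf F$.
\item Invoke objectivity of the reference strain energy, $W(\mathbf X,\mathbf R\mathbf F)=W(\mathbf X,\mathbf F)$, applied with $\mathbf F=\overline{\mathbf F}\mathbf P$. The reference energy is assumed physically admissible, in the sense of Remark~\ref{remark3}; it depends on $\mathbf F$ only through $\mathbf C=\mathbf F^T\mathbf F$.
\item Conclude $\overline W(\overline{\mathbf X},\mathbf R\overline{\mathbf F})=\overline W(\overline{\mathbf X},\overline{\mathbf F})$.
\end{enumerate}

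Equivalently, in terms of the Green tensor, $\overline{\mathbf C}$ maps to $\mathbf P^T\overline{\mathbf C}\mathbf P$. This depends only on $\overline{\mathbf C}=\overline{\mathbf F}^T\overline{\mathbf F}$, which is invariant under $\overline{\mathbf F}\mapsto\mathbf R\overline{\mathbf F}$.

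The only point requiring care is the standing assumption that the reference $W$ is objective. I would state it explicitly, noting that $W(\mathbf F)=\widehat W(\mathbf F^T\mathbf F,\mathbf X)$ as in Section 2, and check that $\mathbf P$ multiplies on the material side rather than the spatial side. This follows from \eqref{ex4trans2}, where only the column $K=1$ is scaled.

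I would also remark that isotropy is generally lost, because $\mathbf P^T\overline{\mathbf C}\mathbf P$ is not invariant under $\overline{\mathbf C}\mapsto\mathbf Q\overline{\mathbf C}\mathbf Q^T$ unless $\mathbf Q$ commutes with $\mathbf P$. That explains why this example yields an objective but anisotropic (transversely isotropic about $X_1$) material. This last observation is not needed for the corollary itself.
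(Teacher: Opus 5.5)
Your proposal is correct and follows essentially the same route as the paper: substitute $\mathbf R\overline{\mathbf F}$ into $\overline W=W(\mathbf X(\overline{\mathbf X}),\overline{\mathbf F}\mathbf P)+\epsilon h(\overline{\mathbf X})$, use $(\mathbf R\overline{\mathbf F})\mathbf P=\mathbf R(\overline{\mathbf F}\mathbf P)$, and invoke objectivity of the reference $W$. You state that objectivity hypothesis explicitly, via $W=\widehat W(\mathbf F^T\mathbf F,\mathbf X)$; the paper uses it silently in its middle equality, so making it explicit is a worthwhile clarification.
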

 \begin{proof}
$\overline W(\overline{\mathbf X},\mathbf R\overline{\mathbf F})
=W\big(\mathbf X(\overline{\mathbf X}),\mathbf R\overline{\mathbf F}\mathbf P\big)+\epsilon h(\overline{\mathbf X})
=W\big(\mathbf X(\overline{\mathbf X}),\overline{\mathbf F}\mathbf P\big)+\epsilon h(\overline{\mathbf X})
=\overline W(\overline{\mathbf X},\overline{\mathbf F}).$
 \end{proof}
  Consider the compressible neo-Hookean material with strain energy function  \eqref{neo} as the reference material. In this example, since $\mathbf F=\overline{\mathbf F}\mathbf P$, the first principal invariant is not  preserved: 
\[
 \inv1= F_{kK} F_{kK}= \overline F_{kL}P_{LK} \overline F_{kM}P_{MK}= \overline C_{LM} ( P P^T)_{LM} = \tinv1 +(e^{-2 \alpha \epsilon} -1) \tiv.
\] Here we  defined 
\begin{equation} 
  \label{defiv}\tiv=\overline C_{11}=\overline F_{k1}\overline F_{k1}.
\end{equation}  
The third principal invariant is transformed to
\[\iii= (\det \mathbf F)^2 = (\det \overline{\mathbf F})^2 (\det \mathbf P)^2 = e^{-2 \alpha \epsilon} \tiii.\] 
Hence, the transformed strain energy function is obtained as
\be 
\label{ex4wbar}\overline W(\overline{\mathbf X},\overline{\mathbf F})= \frac{\mu}{2}(\tinv1 +(e^{-2 \alpha \epsilon} -1) \tiv-3) + \kappa (e^{-\alpha \epsilon}\sqrt{\tiii}-1)^2 + \epsilon h(\overline{\mathbf X}).\ee 
One can verify that equation \eqref{ex4wbar}   no longer corresponds to an isotropic material. To show that let us apply the test for $\mathbf H \in \mathcal{O}_3$. The first and the third invariants of the Green deformation tensor are invariant under orthogonal transformations, i.e., $\tinv1(\overline{\mathbf F}\mathbf H)=\tinv1(\overline{\mathbf F})$ and $\tiii(\overline{\mathbf F}\mathbf H)=\tiii(\overline{\mathbf F})$. However, $\tiv$ given by \eqref{defiv} is not, 
since the  Green deformation  tensor evaluated at $\overline{\mathbf F}\mathbf H$ is given by $(\overline{\mathbf F}\mathbf H)^T(\overline{\mathbf F}\mathbf H) = \mathbf H^T\overline{\mathbf C}\mathbf H$. In index notation,  $H_{MK} \overline C_{MN} H_{NL}$. Consequently,  $\tiv(\overline{\mathbf F}\mathbf H) = (\mathbf H^T\overline{\mathbf C}\mathbf H)_{11} = H_{M1} \overline C_{MN} H_{N1}.$ For an arbitrary orthogonal tensor $\mathbf H \in \mathcal{O}_3$, this quantity is not equal to $ \overline C_{11}$. For instance, if we choose $\mathbf H$ as a proper orthogonal rotation of $\pi/2$ about the $X_3$-axis: 
\[
\mathbf H = \begin{bmatrix} 0 & -1 & 0 \\ 1 & 0 & 0 \\ 0 & 0 & 1 \end{bmatrix}.
\] 
Substituting this  into the invariant  gives: 
$$\tiv(\overline{\mathbf F}\mathbf H) = \overline C_{22}.
$$Since generally $\overline C_{11} \neq \overline C_{22}$ for an arbitrary deformation field, we have $\tiv(\overline{\mathbf F}\mathbf H) \neq \tiv(\overline{\mathbf F})$. Therefore, $\overline W(\overline{\mathbf X},\overline{\mathbf F}\mathbf H) \neq \overline W(\overline{\mathbf X},\overline{\mathbf F})$, meaning the transformed strain energy function does not characterize an isotropic material. 
\end{example}
\section{Conclusion and Remarks}
      
In this study,  the equivalence group of the field equations governing arbitrary motions of inhomogeneous hyperelastic solids has been investigated within the framework of Lie group analysis. The infinitesimal generators associated with the equivalence group are derived explicitly. Theorem \ref{theorem1} has been established that provides the additional constraints required to ensure that the transformed Piola-Kirchhoff stress tensor admit a strain energy function.  Furthermore, it has been shown that the kinematic compatibility condition is automatically preserved under the resulting transformations. The  results have been illustrated by considering several subgroups of the equivalence groups. These examples realize transformations between homogeneous and inhomogeneous materials, as well as between isotropic and anisotropic materials. Similar examples can be constructed by considering many other subgroups. Furthermore, as in Özer's work \cite{ozer2019},  classification of the subgroups may be addressed in  future research.

We close with a few remarks on the scientific legacy that underlies this work. This paper is dedicated to the memory of Professor Erdo\u{g}an S. Şuhubi, whose contributions shaped a systematic use of Lie group methods in continuum mechanics. His early studies focused on conservation laws and similarity solutions for a variety of problems in continuum mechanics using the method of exterior differential forms \cite{SUHUBI1984119,SUHUBI1987,suhubi:1987:ConservationLawsBBM,suhubi1989:ConservationLawsNonlinearElastodynamics,suhubiari}. He extended the isovector method to general balance equations with arbitrary number of dependent and independent variables \cite{suhubi:1991:IsovectorFieldsGeneralBalance}. This was followed by the development of a comprehensive theory of the isovector fields of equivalence groups for balance equations, culminating in the case of arbitrary order equations \cite{SuHubi:1999:EquivalenceGroupsBalanceEquations,SuHubi:2000:ExplicitIsovectorFields,SuHubi:2004:EquivalenceFirstOrder,SuHubi:2004:EquivalenceGroupsArbitraryPartI,SuHubi:2005:ExplicitIsovectorFieldsPartII}. He compiled all this knowledge in the field of exterior analysis and  his years of experience in a comprehensive book entitled ``Exterior Analysis: Using Applications of Differential Forms" \cite{suhubi2013exterior}. 

Hyperelasticity was one of Şuhubi's long standing research interests within the framework of Lie group analysis. He investigated similarity solutions and symmetry properties of hyperelastic materials in a series of papers \cite{suhubi1992:SimilarityPlaneWavesHyperelastic,suhubi1994:SymmetryHeterogeneousHyperelastic,SuHubi:1995:SymmetryRadialHyperelastic,SuHubi:1997:SymmetryArbitraryHyperelastic}. Later, as an application  he determined the infinitesimal generators of the equivalence group for the field equations of homogeneous hyperelasticity in \cite{SuHubi:2000:ExplicitIsovectorFields}. He envisioned extending this work to investigate whether equivalence transformations could relate the governing equations of different classes of hyperelastic materials. As his student, I had the privilege of continuing this line of research after his passing. I hope that the present work honors his vision and contributes to the scientific legacy he left behind. 
\section{Appendix}
Here we give the formulation developed   by Şuhubi \cite{SuHubi:2000:ExplicitIsovectorFields} below.
To investigate the equivalence groups associated with  the second order balance equations \eqref{balance},  the following additional variables are added  to the coordinate cover of the extended manifold:
\[ 
s_j^{\alpha i} = \frac{\partial \Sigma^{\alpha i}}{\partial x^j}, \quad
\sigma_\beta^{\alpha i} = \frac{\partial \Sigma^{\alpha i}}{\partial u^\beta}, \quad
s_\beta^{\alpha ij} = \frac{\partial \Sigma^{\alpha i}}{\partial v_j^\beta}, \quad 
t_i^\alpha = \frac{\partial \Sigma^\alpha}{\partial x^i}, \quad
\tau_\beta^\alpha = \frac{\partial \Sigma^\alpha}{\partial u^\beta}, \quad
t_\beta^{\alpha i} = \frac{\partial \Sigma^\alpha}{\partial v_i^\beta}.
\]  
The isovector field on the extended manifold is given by 
\[
V = X^i \frac{\partial}{\partial x^i} + U^\alpha \frac{\partial}{\partial u^\alpha} + V_i^\alpha \frac{\partial}{\partial v_i^\alpha} + S^{\alpha i} \frac{\partial}{\partial \Sigma^{\alpha i}} + T^\alpha \frac{\partial}{\partial \Sigma^\alpha} + \bar{S}_j^{\alpha i} \frac{\partial}{\partial s_j^{\alpha i}} + S_\beta^{\alpha ij} \frac{\partial}{\partial s_\beta^{\alpha ij}}  + \bar{T}_i^\alpha \frac{\partial}{\partial t_i^\alpha} + T_\beta^{\alpha i} \frac{\partial}{\partial t_\beta^{\alpha i}}.\]
The components of $V$ are determined as:
\begin{align*}
X^i &= X^i(\mathbf{x}, \mathbf{u}), \qquad U^\alpha = U^\alpha(\mathbf{x}, \mathbf{u}), \qquad 
V_i^\alpha = D_i U^\alpha - (D_i X^j)v_j^\alpha, \\
S^{\alpha i} &= g_\beta^\alpha \Sigma^{\beta i} + (D_j X^i) \Sigma^{\alpha j} - \frac{\partial X^j}{\partial u^\beta} v_j^\beta \Sigma^{\alpha i} + \sum_{r=1}^{n-1} f_{\alpha_1 \dots \alpha_r}^{\alpha i i_1 \dots i_r}(\mathbf{x}, \mathbf{u}) v_{i_1}^{\alpha_1} \dots v_{i_r}^{\alpha_r} + f^{\alpha i}(\mathbf{x}, \mathbf{u}), \\
T^\alpha &= g_\beta^\alpha \Sigma^\beta - \frac{\partial X^i}{\partial u^\beta} v_i^\beta \Sigma^\alpha - (D_i g_\beta^\alpha) \Sigma^{\beta i} - \frac{\partial}{\partial x^i} (D_j X^i) \Sigma^{\alpha j} - \sum_{r=1}^{n-1} (D_i f_{\alpha_1 \dots \alpha_r}^{\alpha i i_1 \dots i_r}) v_{i_1}^{\alpha_1} \dots v_{i_r}^{\alpha_r} - D_i f^{\alpha i}
\end{align*}
where $D_i$ is the total derivative operator defined as $D_i f=\frac{\partial f}{\partial x^i}+\frac{\partial f}{\partial u^\alpha} v_i^\alpha$, and all of the arbitrary functions $f_{\alpha_1 \dots \alpha_r}^{\alpha i i_1 \dots i_r}$ are antisymmetric in superscripts and subscripts. 
Components of the additional variables are determined by the following  equations:
\begin{equation}
  \label{slersuhubi}
  \begin{split}
\bar{S}_j^{\alpha i} &= \frac{\partial F^{\alpha i}}{\partial x^j} +\left( \frac{\partial F^{\alpha i}}{\partial u^\beta} + \frac{\partial F^{\alpha i}}{\partial \Sigma ^{\gamma j}} \sigma^{\gamma j}_\beta + \frac{\partial F^{\alpha i}}{\partial \Sigma ^\gamma} \tau^{\gamma}_\beta \right) v_j^\beta + \frac{\partial F^{\alpha i}}{\partial \Sigma^{\beta k}} s_j^{\beta k} + \frac{\partial F^{\alpha i}}{\partial \Sigma^{\beta}}t_j^{\beta }, \\
S^{\alpha ij}_\beta &= \frac{\partial F^{\alpha i}}{\partial v_j^\beta} + \frac{\partial F^{\alpha i}}{\partial \Sigma^{\gamma k}} s_\beta^{\gamma kj} + \frac{\partial F^{\alpha i}}{\partial \Sigma^\gamma} t_\beta^{\gamma j}, 
\end{split}
\end{equation}
where $
F^{\alpha i} = -s^{\alpha i}_j X^j - \sigma^{\alpha i}_\beta U^\beta - s^{\alpha ij}_\beta V_j^\beta + S^{\alpha i}.$

\bibliographystyle{unsrt}
\bibliography{reference}

\end{document}